\newtheorem{theorem}{Theorem}
\newtheorem{proof}{Proof}
\newtheorem{corollary}{Corollary}
\newtheorem{definition}{Definition}
\begin{document}
\title{Extending matchgate simulation methods to universal quantum circuits}

\author{Avinash Mocherla}
\affiliation{%
Department of Physics and Astronomy, University College London, United Kingdom}
\author{Lingling Lao}
\affiliation{%
Department of Physics and Astronomy, University College London, United Kingdom}
\affiliation{%
School of Computer Science, Northwestern Polytechnical University, China}
\author{Dan E. Browne}
\affiliation{%
Department of Physics and Astronomy, University College London, United Kingdom}

\maketitle


\begin{abstract}
Matchgates are a family of parity-preserving two-qubit gates, nearest-neighbour circuits of which are known to be classically simulable in polynomial time. In this work, we present a simulation method to classically simulate an $\boldsymbol{n}$-qubit circuit containing $\boldsymbol{N}$ gates, $\boldsymbol{m}$ of which are universality-enabling gates and $\boldsymbol{N-m}$ of which are matchgates, in the setting of single-qubit Pauli measurements and product state inputs. The universality-enabling gates we consider include the SWAP, CZ, and CPhase gates. For fixed $\boldsymbol{m}$ as $\boldsymbol{n} \rightarrow \boldsymbol{\infty}$, the resource cost, $\boldsymbol{T}$, scales as $\boldsymbol{\mathcal{O}\left(\left(\frac{en}{m+1}\right)^{2m+2}\right)}$. For $\boldsymbol{m}$ scaling as a linear function of $\boldsymbol{n}$, however,  $\boldsymbol{T}$ scale as $\boldsymbol{\mathcal{O}\left(2^{2nH\left(\frac{m+1}{n}\right)}\right)}$, where $\boldsymbol{H}(\lambda)$ is the binary entropy function.
\end{abstract}


In quantum computing context, matchgates refer to a group of two-qubit parity-preserving gates of the form: 
\[
G(A,B) = \begin{pmatrix}
a & 0 & 0& b\\
0 & e & f& 0\\
0 & g& h& 0\\
c & 0 & 0& d\\
\end{pmatrix} 
\]
\[
A = \begin{pmatrix}
a & b \\
c & d 
\end{pmatrix}, 
B = \begin{pmatrix}
e & f \\
g & h 
\end{pmatrix}
\]
where $det(A) = det(B) = \pm 1$. Gates of this type commonly occur in domains such as quantum chemistry \cite{arute2020observation}, fermionic linear optics \cite{knill2001fermionic}, quantum machine learning \cite{johri2021nearest}, and comprise part of the native gate set of several quantum computing architectures. An important fact about matchgates is that when they are composed into nearest neighbour circuits, they are efficiently classically simulable. This was first discovered by Valiant \cite{valiant2002quantum} in the context of `perfect matchings' of a graphical representation of matchgate unitaries. It was then extended to a fermionic context by Terhal and DiVincenzo \cite{terhal2002classical}, and later Josza and Miyake \cite{Josza2008}, each of whom developed classical algorithms to efficiently simulate matchgate circuits in different input and output regimes. When confronted with the fact that matchgates are efficiently simulable, a natural question is whether such efficiency is maintained as small numbers of a universality-enabling primitive outside of the matchgate group are added. For example, such primitives could include CZ, SWAP and CPhase gates, which are all examples of `parity preserving non-matchgates' \cite{Brod2011}, (for brevity we refer to these as ZZ gates). Is it possible to develop a method to simulate these universal matchgate circuits, dubbed `matchgate + ZZ' circuits?

To answer this question we develop a Pauli-basis simulation method, which is able to simulate an $n$-qubit `matchgate + ZZ' circuit containing $m$ ZZ gates, in time which is polynomial in $n$ when $m$ is fixed, and scales as $\mathcal{O}(2^{2nH(\frac{m+1}{n})})$ for $m \leq \frac{n}{2}$ in general, where $0 < H(\lambda) \leq 1$ is the binary entropy function. The use of a Pauli-basis method stems from the equivalence  of fermionic and Pauli operator descriptions of matchgates via the Jordan Wigner representation. This allows us to succinctly represent the `matchgate + ZZ' simulation problem in Liouville notation. In this notation, matchgate circuits manifest as block diagonal super-operators acting on closed (mostly poly-sized) linear spans of Pauli-operators. On the other hand, ZZ gates manifest as superoperators which act across several linear spans of Pauli operators. Given these two observations, classical simulation of matchgate + ZZ circuits is possible if, starting from a closed linear span, we adaptively keep track of the newly accessed Pauli basis elements as each ZZ gate is applied.

The paper is structured as follows. In section 1, we review the properties of matchgates relevant to classical simulation. We introduce methods for simulating matchgate circuits from \cite{Josza2008} and show how this formalism can be readily recast in Liouville notation. In Section 2, we extend these insights to non-matchgates and show how matchgates can in principle be simulated with any non-matchgate by considering the structure of the corresponding superoperators. In Section 3, we introduce the Pauli-basis simulation technique specifically for `matchgate + ZZ' circuits, and in Section 4, determine its scaling in the regimes of fixed $m$ and and variable $m$. Finally, in Section 5, we verify the asymptotic scalings by performing numerical simulations of `matchgate + ZZ' circuits arising in the Fermi-Hubbard model.

To begin, we introduce the salient properties of matchgates relevant to our discussion. 

\section{Review of Matchgate Simulation} 
Matchgates, which in quantum computing are gates of the form $G(A,B)$, where $det(A) = det(B) = \pm 1$, are closely associated with the dynamics of non-interacting fermions. This link stems from the fact that matchgates are the result of mapping a subset of so-called \textit{fermionic Gaussian operations}, which arise in fermionic physics, to quantum computation. The fermionic setting allows us to understand why circuits of nearest-neighbour matchgates can be efficiently simulated and is the starting point of further analysis.

\subsection{Gaussian operations}
Consider a system of $n$ fermionic modes with the $k$th mode associated with a creation and annihilation operator, $a_{k}^{\dagger}$ and $a_{k}$ respectively. It has been shown that there exists a mapping between these fermionic operators and spin-$\frac{1}{2}$ operators (Pauli matrices) via the Jordan Wigner representation. To demonstrate this map succinctly, it is useful to define a set of $2n$ hermitian operators $c_{2k-1} = a_{k}^{\dagger} + a_{k}$ and $c_{2k} = -i(a_{k}^{\dagger} - a_{k})$, sometimes referred to as Majorana spinors, which pairwise satisfy the following anti-commutation relations for $\quad \mu, \nu=1, \ldots, 2 n$:
\begin{equation} 
\label{Equation: Anti-commutation}
    \left\{c_{\mu}, c_{\nu}\right\} \equiv c_{\mu} c_{\nu}+c_{\nu} c_{\mu}=2 \delta_{\mu \nu} I.
\end{equation}

Using the Jordan-Wigner representation, each spinor can be written in terms of strings of Pauli operators as follows:
\begin{equation} 
\begin{aligned} \label{Equation: Jordan-Wigner Map}
&c_{2k-1} =\left(\prod_{i=1}^{k-1} Z_{i}\right)X_{k} \\
&c_{2k}=\left(\prod_{i=1}^{k-1} Z_{i}\right)Y_{k}.
\end{aligned}
\end{equation} 

It is known that matchgates correspond to unitary operators generated from the set of nearest-neighbour operators $\mathcal{X} = \{X_{k}X_{k+1}, X_{k}Y_{k+1}, Y_{k}Y_{k+1}, Y_{k}X_{k+1}, Z_{k},Z_{k+1}\}$. Using equation \ref{Equation: Jordan-Wigner Map} we can see each of the elements of $\mathcal{X}$ can be expressed as quadratic Majorana monomials:
$$
Z_{k}= -ic_{2k-1}c_{2k},
$$
for $k \in [1, \ldots, n] $, and
$$
\begin{gathered}
X_{k} X_{k+1}= -ic_{2k}c_{2k+1} \\
Y_{k} Y_{k+1}= ic_{2k-1}c_{2k+2} \\
Y_{k} X_{k+1}= ic_{2k-1}c_{2k+1} \\
X_{k} Y_{k+1}= -ic_{2k}c_{2k+2}
\end{gathered}
$$

This highlights that a matchgate is the \textit{Gaussian operation}, $U_{MG} = e^{iH_{MG}}$ generated from a Hamiltonian $H_{MG}$ which is a linear combination of quadratic monomials:
\begin{equation} \label{Equation: Matchgate quadratic Hamiltonian}
H_{MG} = -i\sum^{2k+2}_{\mu \neq \nu = 2k-1} \alpha_{\mu\nu}c_{\mu}c_{\nu}, 
\end{equation} 

where $\alpha_{\mu\nu}$ is a real anti-symmetric matrix, and the spinors are drawn from the set $\{c_{2k-1},c_{2k},c_{2k+1},c_{2k+2}\}$. The set consists of spinors that correspond to nearest-neighbour modes, specifically $k$ and $k + 1$. However, Gaussian operations can also be generated from quadratic monomials connecting non-nearest neighbour modes, such as $c_{1}c_{5}$ or  $c_{2}c_{6}$, ie. from a more general Hamiltonian: 
\begin{equation} \label{Equation: General quadratic Hamiltonian}
H = -i\sum^{2n}_{\mu,\nu} \alpha_{\mu\nu}c_{\mu}c_{\nu}. 
\end{equation} 

Such monomials, expressed in terms of Pauli operators, act non-trivially across multiple qubits. It can be shown, however, that the Gaussian operation $e^{iH}$ generated by $H$ can be efficiently decomposed as a circuit of $\mathcal{O}(n^{3})$ nearest-neighbour matchgates \cite{Josza2008}, when expressed in terms of qubits. Hence we can consider a general Gaussian operation to correspond to a nearest-neighbour matchgate \textit{circuit}.

\begin{table*}[t] 

    \centering
    \begin{tabular}{|p{0.3cm}|p{5cm}|p{6cm}|}

    \hline
    $k$ & $ L_{2}^{(k)}$  & J-W-equivalent Paulis\\ 
    \hline
    $0$ &  $I$ &  $II$  \\
    $1$ &  $c_{1}, c_{2},c_{3},c_{4}$  &  $XI,YI,ZX,ZY$  \\
    $2$  & $ c_{1}c_{2}, c_{1}c_{3},c_{1}c_{4}, c_{2}c_{3},c_{2}c_{4},c_{3}c_{4} $  & $iZI, iYX, iYY, iXX, iXY, iIZ$\\
    $3$  &  $c_{1}c_{2}c_{3},c_{1}c_{2}c_{4},c_{1}c_{3}c_{4},c_{2}c_{3}c_{4}$  & $iIX,iIY,iXZ,iYZ$   \\
    $4$ &  $c_{1}c_{2}c_{3}c_{4}$  &  $-ZZ$  \\
    \hline
    \end{tabular}
    \caption{Spinor basis $L_{2}$ and Pauli basis $P_{2}$ for all values of $k \in [0,4]$.}
    \label{Table: Majorana Pauli Comparison}
    
\end{table*}

\begin{figure*}[t] 
    \centering
    \includegraphics[width=0.8\textwidth]{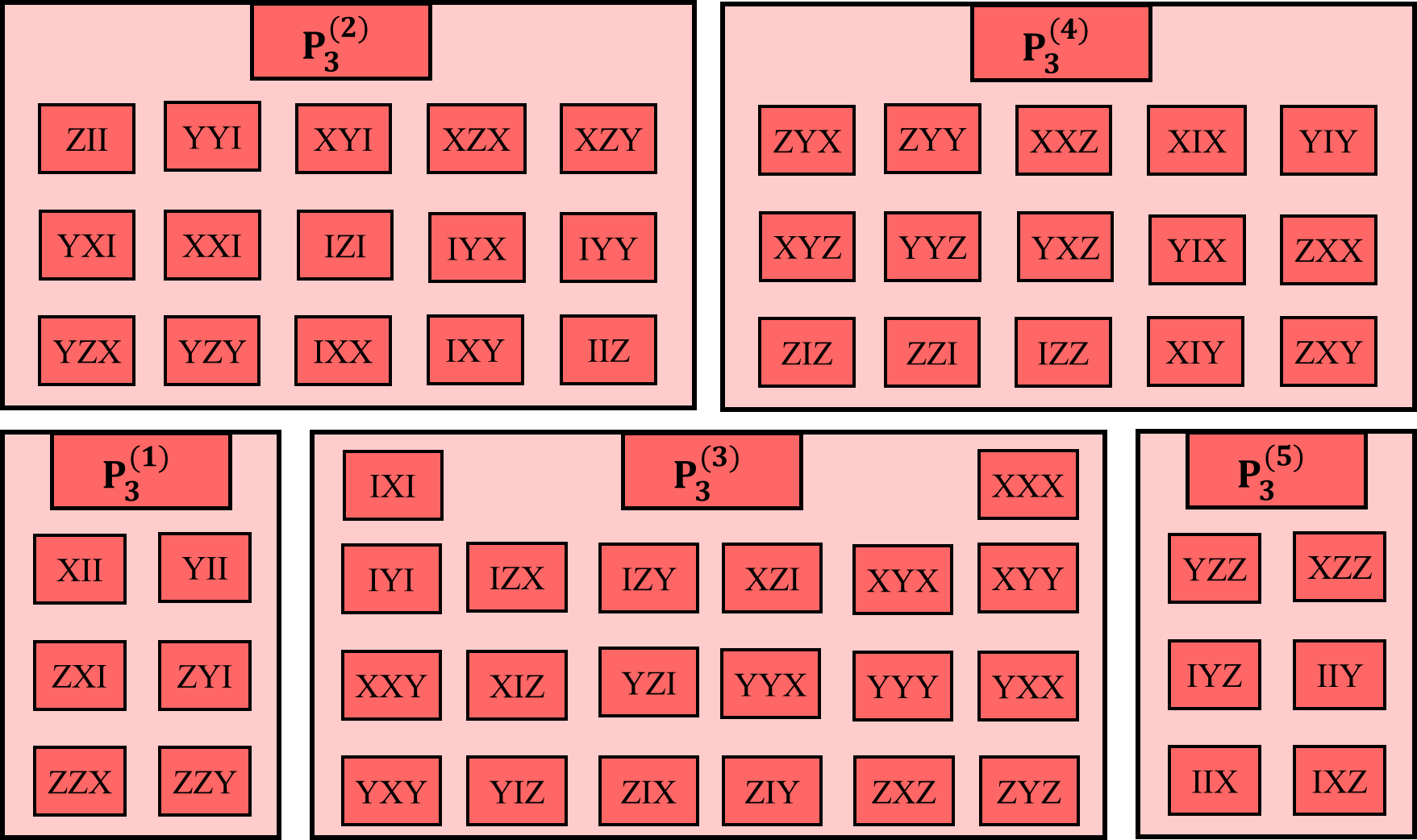}
    \caption{Subsets of $P_{3}$. There are 5 subsets, denoted as $P_{3}^{(k)}$ each of size ${6 \choose k}$. $P_{3}^{(0)} = III$ and $P_{3}^{(6)}=ZZZ$, as well as factors of $i$ have been omitted for clarity.}
    \label{Figure: linear_spans}
\end{figure*}

\subsection{PI-SO versus CI-MO simulation}

Nearest-neighbour circuits of matchgates are well-known for their efficient classical simulability in specific input-output settings. The two primary setting are known as the PI-SO (product-input, single-output) and CI-MO (computational input, multi-qubit output) settings, as coined by Brod in \cite{Brod2016}. In the PI-SO setting, computations involve nearest-neighbour matchgate circuits with product state inputs and measurements of single qubits. The simulation of such computations was established by Josza and Miyake \cite{Josza2008}, who exploited the Clifford algebraic properties of spinors (to be shown) to enclose the dynamics in a poly-sized vector space.

On the other hand, in the CI-MO setting, which focuses on computational basis states and measurements on subsets of output modes, it was demonstrated \cite{terhal2002classical} that the marginal probability p(y|x) for input and output bitstrings x and y can be efficiently evaluated. This was achieved by reformulating the computation as the evaluation of the Pfaffian of an antisymmetric matrix, using Wick's theorem. Although both settings allow for efficient classical simulation, the underlying mechanisms behind their efficiency are not evidently related.
Consequently, we direct our attention to the fundamental insights derived from the PI-SO setting and explore their applicability to universal circuits.

\subsection{Clifford Algebra Formalism}

While the spinors underlying the definition of matchgates have a natural fermionic interpretation, they can also be interpreted abstractly as the generators of a $2^{2n}$ dimensional Clifford algebra denoted $\mathcal{C}_{2n}$, as mentioned in \cite{Josza2008}. The relevance of this observation to quantum computation is that the basis of $\mathcal{C}_{2n}$:
\begin{multline} \label{equation: products of spinors}
L_{n} = \{I, c_{\mu}, c_{\mu\nu}=c_{\mu} c_{\nu},  \cdots, c_{12 \cdots 2n}=c_{1} c_{2} \cdots c_{2n} \\ \mid 1 \leq \mu<\nu < \cdots  \leq 2n \},
\end{multline}
econstructed from unique ascending order products of spinors $c_{\mu}$, forms a basis for $2^{n} \times 2^{n}$ hermitian matrices \cite{hestenes2012clifford} \cite{wan2022matchgate}. Interestingly, such a property is also realised by the $2^{2n}$ dimensional n-qubit Pauli basis $\mathcal{P}_{n} = \{I, X, Y, Z\}^{\otimes n}$, and indeed, using the Jordan-Wigner representation, each element of $L_{n}$ can be represented by an n-qubit tensor product of Pauli operators (up to a global phase). This is demonstrated in Table \ref{Table: Majorana Pauli Comparison} for n = 2. An important consequence of this mapping is that the Pauli basis elements inherit the same graded structure as the underlying Clifford algebra. To make this clear, it is useful to label subsets of $L_{n}$ by the degree of their constituent spinors, $k$, which we denote by $L_{n}^{(k)}$. The size of each of these sets is  ${2n\choose{k}}$. There are 2$n$+1 such sets (including $L_{n}^{(0)} = I$), and the linear span of the elements in $L_{n}^{(k)}$ is written as $\mathcal{L}_{n}^{(k)}$. We can see that as a vector space $\mathcal{C}_{2n} = \bigoplus_{k}\mathcal{L}_{n}^{(k)} $ and thus is graded. Each span $\mathcal{L}_{n}^{(k)}$ can be identified with a Pauli basis vector space $\mathcal{P}_{n}^{(k)}$ such that $\mathcal{P}_{n} = \bigoplus_{k}\mathcal{P}_{n}^{(k)}$. This structure is illustrated in Figure \ref{Figure: linear_spans} for $n = 3$.






\subsection{PI-SO Matchgate Simulation}
We now show how this graded structure is exploited for classical simulation. Let us consider a typical computation in the PI-SO setting as the evaluation of the following expectation value: 

\begin{equation} \label{Equation: PISO}
\langle Z_{j} \rangle =  \bra{\psi}\mathcal{U}^{\dagger}(Z_{j})\mathcal{U}\ket{\psi},
\end{equation}

where $Z_{j}$ is the $Z$ Pauli operator on the $\textit{j}th$ qubit, $\mathcal{U}$ is the unitary denoting matchgate circuit with $N \thicksim poly(n)$ gates, and $\ket{\psi}$ is a product state. For a generic unitary, the classical resources required for this computation a priori scales exponentially. For a matchgate circuit, however, we can exploit the following key property of Gaussian operations, as shown by Josza and Miyake, to evaluate this expression in time $poly(n)$:

\begin{theorem} \label{Theorem: Josza Proof}
\cite{Josza2008} Let $H$ be a quadratic Hamiltonian as in \ref{Equation: General quadratic Hamiltonian} and $\mathcal{U} = e^{iH}$ the corresponding Gaussian operation. Then for all $\mu$:
\begin{equation}
\mathcal{U}^{\dagger} c_{\mu} \mathcal{U}=\sum_{\nu=1}^{2 n} \mathcal{T}_{\mu \nu} c_{\nu}
\end{equation}
where the matrix $\mathcal{T} \in SO(2n)$, and we obtain all of $S O(2 n)$ in this way.

\end{theorem}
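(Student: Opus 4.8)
The plan is to show that conjugation by $U=e^{iH}$ acts \emph{linearly} on the real span of the Majorana operators $\{c_1,\dots,c_{2n}\}$, and then to recognise the resulting change-of-basis matrix as an element of $SO(2n)$. The reason to expect linearity is purely algebraic: a quadratic generator cannot, under commutation, raise the degree of a Majorana monomial, and this degree-preservation is exactly what singles out Gaussian operations. This is the one identity I would check most carefully.

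First I would pass to the Heisenberg picture and set $c_\mu(t) := e^{-itH}c_\mu e^{itH}$, so that $c_\mu(0)=c_\mu$ and $c_\mu(1)=U^{\dagger}c_\mu U$. Differentiating yields the operator-valued linear ODE $\dot c_\mu(t) = -i[H,c_\mu(t)]$. The core step is to evaluate $[c_\mu c_\nu, c_\lambda]$ using only the anticommutation relations \ref{Equation: Anti-commutation}: repeatedly rewriting $c_\lambda c_\mu = 2\delta_{\lambda\mu}I - c_\mu c_\lambda$ cancels the cubic contributions and leaves $[c_\mu c_\nu, c_\lambda] = 2\delta_{\lambda\nu}c_\mu - 2\delta_{\lambda\mu}c_\nu$, which is linear in the $c$'s. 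Contracting with the real antisymmetric coefficients $\alpha_{\mu\nu}$ of $H$ in \ref{Equation: General quadratic Hamiltonian} then gives $[H,c_\lambda]$ as a real linear combination of the $c_\nu$, so that $\dot{\mathbf c}(t)=M\,\mathbf c(t)$ for a fixed real matrix $M$ proportional to $\alpha$. Solving, $\mathbf c(t)=e^{Mt}\mathbf c(0)$, and setting $t=1$ gives $U^{\dagger}c_\mu U=\sum_{\nu}R_{\mu\nu}c_\nu$ with $R:=e^{M}$.

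It then remains to place $R$ in $SO(2n)$ and to establish surjectivity. Reality of $R$ is immediate since $M$ is real. For orthogonality I would argue structurally rather than track the normalisation constant: conjugation by a unitary is a $*$-algebra automorphism, so the operators $c_\mu(1)$ must again obey \ref{Equation: Anti-commutation}; expanding $\{c_\mu(1),c_\nu(1)\}=2(RR^{\mathsf T})_{\mu\nu}I$ and comparing forces $RR^{\mathsf T}=I$. (Equivalently, $M\propto\alpha$ is antisymmetric, so $e^{M}$ is automatically orthogonal.) Since an antisymmetric matrix is traceless, $\det R = e^{\operatorname{tr}M}=1$, so $R\in SO(2n)$ and not merely $O(2n)$. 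For the final assertion that every $R\in SO(2n)$ is realised: the assignment $H\mapsto R$ just constructed is a Lie-group homomorphism, its differential carries the space of quadratic Hamiltonians onto all of $\mathfrak{so}(2n)$ (every real antisymmetric $2n\times 2n$ matrix is a rescaled $\alpha$, and every such $\alpha$ defines a legitimate quadratic $H$), and $SO(2n)$ is connected with surjective exponential map $\mathfrak{so}(2n)\to SO(2n)$; hence the image is all of $SO(2n)$.

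The only genuinely delicate point I anticipate is the commutator identity together with the bookkeeping of its constant; once the ODE closes within the linear span, everything downstream --- integrating the ODE, reading off orthogonality from \ref{Equation: Anti-commutation}, and invoking surjectivity of the Lie-algebra exponential --- is standard. A secondary point worth stating explicitly is that one must allow the general, not-necessarily-nearest-neighbour Hamiltonian \ref{Equation: General quadratic Hamiltonian} in order to reach all of $SO(2n)$: the nearest-neighbour generators of $\mathcal X$ alone generate only a subgroup, and it is precisely the $\mathcal O(n^{3})$-gate decomposition noted above that reconciles the two descriptions.
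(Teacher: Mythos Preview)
Your proposal is correct and follows essentially the same route as the paper: introduce the time-evolved Majoranas $c_\mu(t)$, differentiate to obtain an operator ODE, use the anticommutation relations to show the commutator $[c_{\mu}c_{\nu},c_{\lambda}]$ closes linearly in the $c$'s, and integrate to get $R=e^{M}$. You actually go further than the paper's own proof, which stops after writing $c_\mu(t)=\sum_\nu R_{\mu\nu}(t)c_\nu(0)$ and does not explicitly verify $R\in SO(2n)$ or the surjectivity claim; your antisymmetry/trace argument for $\det R=1$ and your Lie-algebra surjectivity argument supply exactly those missing pieces.
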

\begin{proof} Write $c_\mu$ as $c_\mu(0)$ and introduce $c_\mu(t)=V(t) c_\mu(0) V(t)^{\dagger}$ with $V(t)=e^{i H t}$. Then
$$
\frac{d c_\mu(t)}{d t}=i\left[H, c_\mu(t)\right]
$$
(with square brackets $[a, b]$ denoting the commutator $a b-b a$ ). But $\left[c_{\nu_1} c_{\nu_2}, c_\mu\right]=0$ if $\mu \neq \nu_1, \nu_2$ and $\left[c_\mu c_\nu, c_\mu\right]=-2 c_\nu$ (from Equation \ref{Equation: Anti-commutation})   so
$$
\frac{d c_{\mu}(t)}{d t}=\sum_\nu 4 h_{\mu \nu} c_{\nu}(t) 
$$
$$
\quad c_{\mu}(t)=\sum_{\nu} \mathcal{T}_{\mu \nu}(t) c_{\nu}(0).
$$
and the theorem follows by just setting t = 1. 
\end{proof}
Theorem 1 implies that conjugation by a Gaussian operation maps individual spinors to the closed real linear span of all spinors. Alternatively stated, conjugation by matchgate maps elements of $\mathcal{L}_{n}^{(1)}$ to $\mathcal{L}_{n}^{(1)}$.  This applies to both matchgate circuits, $\mathcal{U}$, as well as individual matchgates $U$, as both are Gaussian operations. In each case, we will refer to the equivalent linear operator in $SO(2n)$ as $\mathcal{T}$ and $T$ respectively, and in general $\mathcal{T} = \prod_{i}^{N}T^{(i)}$ where $T^{(i)}$ is the linear operator for the $ith$ gate in a circuit.  As a corollary of Theorem 1, we can see that any degree monomial of spinors will be mapped to a linear span of the same degree, or equivalently the linear span $\mathcal{L}_{n}^{(k)
}$ is preserved by conjugation of matchgates. Explicitly: 

\begin{corollary}
    Let $H$ be a quadratic Hamiltonian as in \ref{Equation: General quadratic Hamiltonian} and $\mathcal{U} = e^{iH}$ the corresponding Gaussian operation. Let $S$ be any (ascending) order subset of $[0,2n]$, and $c_{S}$ be the  monomial of spinors with indices in $S$. Then for all S with cardinality $\vert S\vert$:

\begin{equation}
\mathcal{U}^{\dagger} c_{S} \mathcal{U} = \sum_{S^{'} \in {[2n] \choose  \vert S \vert}} det (\mathcal{T}\vert_{ S ,S^{'}}) c_{S^{'}},
\end{equation}
where ${[ 2n ] \choose  k }$ denotes subsets of $[0,2n]$ with cardinality $k$, and  $\mathcal{T}\vert_{ S ,S^{'}}$ indicates a minor of $\mathcal{T}$ whose rows and columns are indexed by $S$ and $S^{'}$ respectively.

\end{corollary} 

The coefficients  $det (\mathcal{T}\vert_{ S ,S^{'}})$ follow from applying the anti-commutation relations of the spinors, and are in fact the elements of a \textit{compound matrix }$\mathcal{R}^{(\vert S \vert)}_{n} \in SO({[2n]\choose \vert S \vert})$ for the given cardinality.

To understand how these results allow for efficient classical simulation in the PI-SO setting, let us rewrite Equation \ref{Equation: PISO}  using  
the fact that $Z_{j} = -ic_{2j-1}c_{2j}$:
$$
\langle Z_{j} \rangle =  -i\bra{\psi}\mathcal{U}^{\dagger}c_{2j-1}c_{2j}\mathcal{U}\ket{\psi}.
$$
Of immediate consequence is that the measurement operator $-ic_{2j-1}c_{2j} \in \mathcal{L}_{n}^{(2)}$. 
This suggests by conjugating the measurement operator, otherwise known as a `Heiseburg picture' approach, we can write: 
%
\begin{equation} \label{Equation: PISO-matchgate}
\langle Z_{j} \rangle = -i\sum_{S,S^{'} \in {[2n] \choose 2}}det (\mathcal{T}\vert_{ \{2j-1,2j\} ,S^{'}}) \bra{\psi} c_{S^{'}} \ket{\psi},
\end{equation}
which is an expression which can be evaluated in polynomial time. This is due to the fact the sum consists of $\thicksim \mathcal{O}(n^2)$ terms, and for each term the determinant of constant-sized minors can be calculated efficiently from $\mathcal{T}$. Similarly, each expectation in the sum is the product of $n$ single-qubit operator expectation values (as $\ket{\psi}$ is product state), which is also efficient. Hence by exploiting the fact that matchgates preserve the graded structure of the operator space, operators like $-ic_{2j-1}c_{2j}$ which start in a poly-sized vector space will remain in a poly-sized vector space.





\subsection{Liouville notation}
On the contrary, non-matchgates do not preserve the graded structure of the clifford algebra of spinors. In general, they act across several linear spans. To capture this behaviour, we recast the problem in the $4^{n}$ dimensional Pauli-basis via Liouville notation. Specifically,  we can express a state $\rho$ as a real column vector, which we refer to as an operator vector, $|\rho\rangle\rangle \in \mathbb{R}^{4^{n}}$,
$$
|\rho\rangle\rangle=\left[\begin{array}{lll}
\cdots & \rho_{\sigma} & \cdots
\end{array}\right]^{T}
$$
where each vector element is the coefficient of the corresponding operator in its Pauli basis decomposition. That is,
$$
\rho_{\sigma}=\operatorname{Tr}(\sigma \rho)
$$
Here, $\sigma \in P_{n} $
is an $\textit{n}$-fold product of Pauli operators. We can also express an observable (i.e. Hermitian operator) $M$ as a real row vector:
$$
\langle\langle M|= \left[\begin{array}{lll}
\cdots & M_{\sigma} & \cdots
\end{array} \right],
$$
where each vector element is given by:
$$
M_{\sigma}=\operatorname{Tr}(\sigma M).
$$
 
In this notation, the action of conjugation on the operator vector $|\rho\rangle\rangle$ is now given by a linear operator $R \in SO(4^{n})$ acting on $|\rho\rangle\rangle$ denoted $|R\vert\rho\rangle\rangle$. Furthermore, the expectation value of an observable $M$ can be written:
\begin{equation} \label{Equation: Pauli-basis computation}
\langle M \rangle = \langle\langle M |R \vert\rho_{0} \rangle\rangle,
\end{equation} which is equivalent to $\operatorname{Tr}(M U^{\dagger} \rho_{0}U)$ for a circuit $U$. Given that the objects in equation \ref{Equation: Pauli-basis computation} are exponentially sized, a natural question is how such an expression can be evaluated efficiently. To do this, we introduce Algorithm \ref{Algorithm 1} [see appendix A]. Algorithm \ref{Algorithm 1} uses a sparse representation of each object to simulate the entire circuit in time which scales linearly in a quantity called $\chi_{t}$. We define this below.

\begin{definition} \label{definition: Pauli rank}
The \textbf{Pauli rank} of $\vert \rho \rangle\rangle$, denoted $\chi(\rho)$,  is the number of non-zero coefficients in its Pauli-basis decomposition. 
\end{definition}

\begin{definition} \label{definition: Total Pauli rank}
The \textbf{Total Pauli rank} of a \textit{circuit}consisting of $N$ gates, denoted $\chi_{t}$, is the sum of the Pauli ranks obtained after applying each gate in the circuit. That is: 
\begin{equation} \label{Total Pauli Rank}
        \chi_{t} = \sum_{i}^{N}\chi(\rho_{i}).
\end{equation}
\end{definition}
Using Algorithm 1 on a matchgate circuit in the PI-SO setting, for example, we see that the maximum value of the Pauli rank (using a Heisenberg approach to matrix-vector multiplication) is $\chi = {2n \choose 2}$. The total rank is, therefore, $\chi_{t} = N\chi$, which implies the scaling is the same as shown by Josza and Miyake in \cite{Josza2008}. We will also make use of Algorithm 1 in evaluating the more general 'matchgate + ZZ' circuits introduced in the following sections.

\section{Extension to non-matchgates}
We now consider how non-matchgates manifest in the context of the graded structure of the Pauli-operator basis. To do this, we consider how non-Gaussian operations (ie. those generated by $H \in \mathcal{L}_{n}^{(d)}$ with $d \neq 2$) transform elements of a linear span $\mathcal{L}_{n}^{(k)}$ under conjugation. From this, we can deduce the structure of the corresponding super-operator $R:$ $\mathcal{C}_{2n} \rightarrow \mathcal{C}_{2n}$. What  we find is that the application of such an operation induce transformations across \textit{multiple}  linear spans. This leads to a predictable increase in the Pauli rank of the measurement operator vector, allowing us to quantify the computational cost of simulating non-Gaussian operations.  To relate non-Gaussian operations to unitary gates, we show that any single and two qubit gate: $U^{(1)} \in U(2)$ and $U^{(2)} \in U(4)$ respectively, can be reduced to two specific types of non-Gaussian operation. Namely, those operations generate odd-degree ad quartic elements of $L_{n}$ respectively. We can hence characterise the classical cost of simulating any gate of practical relevance in terms of its effect on the Pauli rank of a simulation. 

\subsection{Transformations induced by non-Gaussian operations}

To start, we note we can simplify our notion of non-Gaussian operations to those generated from  $ H \in L_{n}^{(k)}$ , rather than $ H^{'} \in \mathcal{L}_{n}^{(k)}$, without loss of generality. This is from the observation that we can generate any element $H^{'}$  as $U^{\dagger}HU$, using Corollary 1 with a specially selected $\mathcal{T}$. The non-Gaussian operation generated from this new Hamiltonian, given by  $e^{iU^{\dagger}HU}$ is equal to $U^{\dagger}e^{iH}U$, which implies that all non-Gaussian operations generated from $\mathcal{L}^{(k)}_{n}$ are equivalent, up to conjugation by a matchgate, to the non-gaussian operation generated from the set $L_{n}^{(k)}.$ We refer to this property as matchgate-equivalence.

Let us consider the transformation of a basis element $c_{\{k\}} \in L_{n}^{(k)}$, where through abuse of notation the subscript ${k}$ refers to an (ascending order) subset of [0,2n] with cardinality $k$. We specifically consider the transformation $V^{\dagger}c_{\{k\}}V$, where $V$ is a non-Gaussian unitary of the form $e^{i \frac{\theta}{2} c_{\{d\}}}$, and $c_{\{d\}} \in L_{n}^{(d)}$. Now, we have:
\begin{equation} \label{Equation: Non-Gaussian Conjugation}
\begin{split}
V^{\dagger}c_{\{k\}}V = 
& \cos^{2}{(\frac{\theta}{2})}c_{\{k\}} +\sin^{2}{(\frac{\theta}{2})}c_{\{d\}}c_{\{k\}}c_{\{d\}}+  \\
&  i\sin(\frac{\theta}{2})\cos(\frac{\theta}{2}) (c_{\{d\}}c_{\{k\}} - c_{\{k\}}c_{\{d\}}),  \\
\end{split}
\end{equation}

where it can be seen that the expression simplifies depending on whether the monomials $c_{\{d\}}$ and $c_{\{k\}}$ commute or anti-commute. Indeed, this depends on the number of sign flips induced by rearranging $c_{\{d\}}c_{\{k\}} \rightarrow c_{\{k\}}c_{\{d\}}$, which is related to the values $d$, $k$ and a quantity $l \in [0,\textrm{min}(k,d)]$, which we define as $l = |\{k\} \cap \{d\}|$ (i.e. the number of spinors which occur in both  $c_{\{d\}}$ and $c_{\{k\}}$):
\newtheorem{lemma}[theorem]{Lemma}

\begin{lemma} 
Let $k$ and $d$ be the degrees of $c_{\{k\}}$ and $c_{\{d\}}$ respectively.
Let $l = |\{k\} \cap \{d\}|$ be the number of spinor indices common to both. Then
$c_{\{d\}}c_{\{k\}}$ = $(-1)^{dk - l}c_{\{k\}}c_{\{d\}}$.
\end{lemma}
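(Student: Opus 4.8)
The natural approach is to count sign flips from anticommuting the two Majorana monomials past each other, one elementary transposition at a time, being careful about the overlap.

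Write $c_{\{d\}} = c_{\mu_1}\cdots c_{\mu_d}$ and $c_{\{k\}} = c_{\nu_1}\cdots c_{\nu_k}$. To move $c_{\{d\}}$ to the right of $c_{\{k\}}$, I move each factor $c_{\mu_a}$ (for $a = 1,\dots,d$) rightward past all $k$ factors of $c_{\{k\}}$. Since $\{c_\mu, c_\nu\} = 2\delta_{\mu\nu}I$ from Equation \ref{Equation: Anti-commutation}, swapping $c_{\mu_a}$ past $c_{\nu_b}$ costs a factor $-1$ whenever $\mu_a \neq \nu_b$, and costs $+1$ (they "pass through" after using $c_{\mu_a}^2 = I$) whenever $\mu_a = \nu_b$ — but in that coincidence case there is also a subtlety: the two equal spinors must actually be brought adjacent, which is where the bookkeeping lives. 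The cleanest way to organize this: moving $c_{\mu_a}$ past the block $c_{\{k\}}$ picks up $(-1)^{k}$ if $\mu_a \notin \{k\}$, and $(-1)^{k-1}$ if $\mu_a \in \{k\}$ (one of the $k$ transpositions is a trivial pass-through via $c_{\mu_a}^2=I$). Summing the exponent over all $d$ factors of $c_{\{d\}}$: there are $l$ indices $\mu_a$ lying in $\{k\}$ and $d-l$ lying outside, so the total exponent is $(d-l)\cdot k + l\cdot(k-1) = dk - l$. Hence $c_{\{d\}}c_{\{k\}} = (-1)^{dk-l} c_{\{k\}}c_{\{d\}}$, as claimed.

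The main subtlety — and the step I would write out most carefully — is the coincident case: when $c_{\mu_a}$ meets its twin $c_{\nu_b}$ inside $c_{\{k\}}$, I want to verify that the residual monomials still have the same ordered index sets on both sides, so that after all $d$ factors have been moved through, what remains really is $c_{\{k\}}$ on the left and $c_{\{d\}}$ (in its original internal order) on the right, with only an overall scalar accumulated. Concretely, $c_{\mu_a}$ anticommutes past the factors of $c_{\{k\}}$ to the left of $c_{\nu_b}$, then annihilates with $c_{\nu_b}$ ($c_{\mu_a}^2 = I$, no sign), leaving $c_{\mu_a}$ reinstated to the right of the gap — the net effect on that one factor is exactly $k-1$ sign-carrying transpositions and the index sets are preserved. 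Since the $l$ shared indices and $d-l$ unshared indices partition the factors of $c_{\{d\}}$, and the count is additive over them regardless of internal order (each contributes its exponent independently, the scalars commuting freely), the total is $dk - l \pmod 2$. An even cleaner alternative is induction on $d$: the $d=1$ case is the elementary statement above ($c_\mu c_{\{k\}} = (-1)^{k}c_{\{k\}}c_\mu$ if $\mu\notin\{k\}$, $(-1)^{k-1}$ if $\mu\in\{k\}$), and the inductive step peels off one factor $c_{\mu_d}$, applies the hypothesis to $c_{\mu_1}\cdots c_{\mu_{d-1}}$ against $c_{\{k\}}$, and checks that $l$ decreases by exactly $[\mu_d \in \{k\}]$ while the exponent increments by $k - [\mu_d\in\{k\}]$, which reproduces $dk - l$. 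I would present the induction, since it sidesteps any ambiguity about simultaneous coincidences.

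One remark worth including: the result only depends on $l \bmod 2$ together with $dk \bmod 2$, and it correctly specializes to the familiar facts used implicitly in Equation \ref{Equation: Non-Gaussian Conjugation} — e.g. two disjoint monomials ($l=0$) commute iff $dk$ is even, and a monomial always commutes with itself ($d=k=l$ gives exponent $k^2 - k = k(k-1)$, even). This sanity check is where I would stop; everything else is the transposition count above.
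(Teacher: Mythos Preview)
Your proof is correct and follows essentially the same approach as the paper: both count the $dk$ elementary transpositions needed to move $c_{\{d\}}$ past $c_{\{k\}}$, observing that exactly $l$ of them involve a coincident index (where $c_\mu c_\mu = c_\mu c_\mu$) and hence contribute no sign, leaving $(-1)^{dk-l}$. Your version is more detailed---and the inductive reformulation is a clean way to avoid bookkeeping ambiguity---but the underlying argument is identical to the paper's.
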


\begin{proof}
Majorana spinors anti-commute under permutation. This induces a sign depending on the number of permutations needed to transform $c_{\{k\}}c_{\{d\}} \rightarrow c_{\{d\}}c_{\{k\}}$. In the case l = 0 (no common spinors), the sign after this transform is $(-1)^{dk}$, as every spinor in $c_{\{k\}}$ is permuted past every spinor in $c_{\{d\}}$. When l is non-zero, l permutations will induce no sign (as these spinors will commute). Hence the overall sign induced after the full permutation is $(-1)^{dk - l}$.
\end{proof}

We can discern that in the commuting case ($dk - l$ is even), the transformation simplifies to the identity channel. However, in the anti-commuting case ($dk - l$ is odd), the following rotation occurs: 
\begin{equation}
\label{Equation: Non-Gaussian rotation}
V^{\dagger}c_{\{k\}}V  = \cos(\theta)c_{\{k\}} + i\sin(\theta)c_{\{d\}}c_{\{k\}},
\end{equation}
From equation \ref{Equation: Non-Gaussian rotation}, we see that non-matchgates transform $c_{\{k\}}$ to a space spanned by both $c_{\{k\}}$ and a monomial of the form $c_{\{d\}}c_{\{k\}}$. For a given value of $l$, $c_{\{d\}}c_{\{k\}}$ will simplify to $c_{\{d + k - 2l\}}$, as  two spinors with the same index will square to the identity.
With these results, we can specify the linear span of the transformed monomial $V^{\dagger}c_{\{k\}}V$ as follows, depending on the values of $d$, $k$ and $l$: 
\begin{table}[h]
    \setlength\tabcolsep{6pt}
    \begin{tabular}{|l|c|c|}

     \hline
     $l$ parity & $kd$ odd & $kd$ even \\
     \hline
     $l \ even$  &  $\mathcal{L}_{n}^{(k)} \oplus \mathcal{L}_{n}^{(k+d -2l)}$  &  $\mathcal{L}_{n}^{(k)}$   \\
    \hline
     $l \ odd$ & $\mathcal{L}_{n}^{(k)}$   &  $\mathcal{L}_{n}^{(k)} \oplus \mathcal{L}_{n}^{(k+d -2l)}$     \\
     \hline
    \end{tabular}  \\
    \caption{Lookup table for the linear span to which $c_{\{k\}} \in L_{n}^{(k)}$ is transformed by a non-Gaussian operator of degree $d$, depending on the parity of $l$.}
    \label{Table: Non-Gaussian}
\end{table}

For a given $d$, $L_{n}^{(k)}$ will be split into $l$ subsets, each of which will be transformed differently by a non-Gaussian operation $V$, according to Table \ref{Table: Non-Gaussian}.
For an element of a \textit{linear span} $\mathcal{L}_{n}^{(k)}$, which is a linear combination of basis elements $c_{\{k\}} \in L_{n}^{(k)}$, we can say an element of $\mathcal{L}_{n}^{(k)}$ is transformed to the direct sum of the resultant linear spans of each transformed subset. To illustrate this, let us consider non-Gaussian operations corresponding to single- and two-qubit gates. 

\subsection{One-qubit non-Gaussian operations}

\begin{figure}[ht] 
    \centering\includegraphics[width=0.48\textwidth]{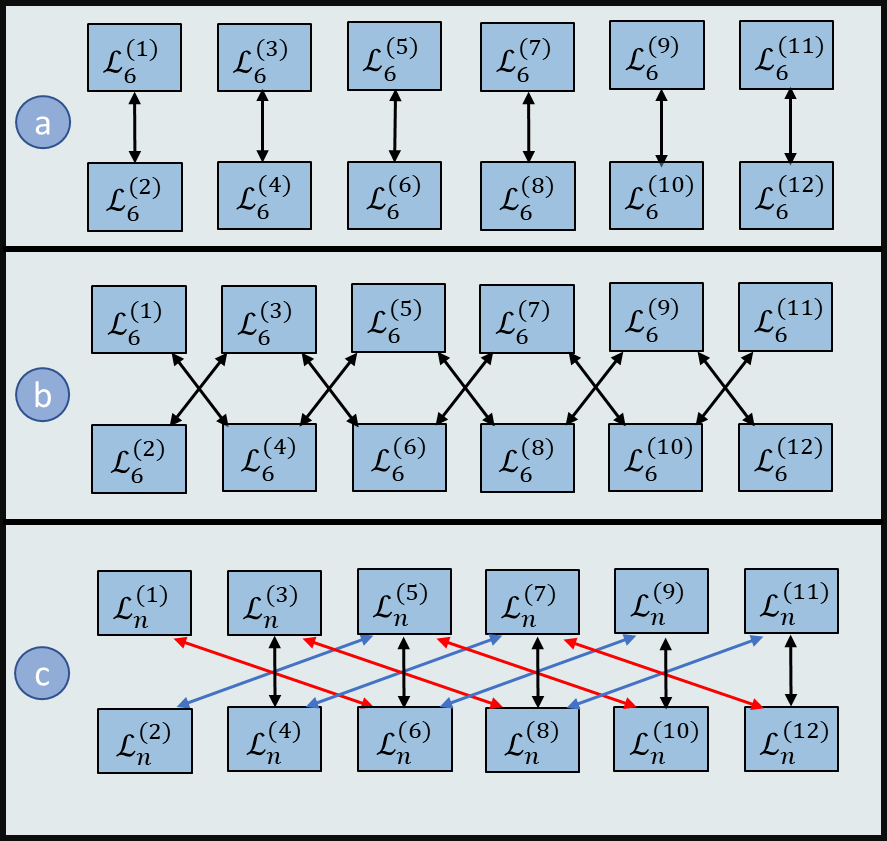}
    \caption{Transformation of an element of $\mathcal{L}_{6}^{(k)}$ under conjugation by an odd-degree non-Gaussian operation. Each arrow indicates the linear span to which a subset of $\mathcal{L}_{n}^{(k)}$ is transformed, corresponding to the possible values of $l$ for that basis set.   
    (a) $d = 1$,  (b) $d = 3$, (c) $d = 5$ (Coloured arrows are for visualisation only).}
    \label{Figure: odd transform}
\end{figure}

Consider that any single-qubit gate can be decomposed, up to a global phase, as the product of three Euler rotations: $U^{(1)} = R_{z}(\theta_{1})R_{y}(\theta_{2})R_{z}(\theta_{3})$, parameterised by angles $\theta_{1},\theta_{2},\theta_{3}$. The $R_{z}(\theta)$ rotations in this decomposition  can be expressed as matchgates, because  $G(R_{z}(\theta),R_{z}(\theta) )= R_z(\theta)\otimes \mathds{1}$, which implies that the non-Gaussian component of a single-qubit rotation is contained in $R_{y}(\theta)$. We can say that single-qubit gates $U^{(1)}$ are \textit{matchgate equivalent} to $R_{y}$ gates.

An $R_{y}$ rotation applied to qubit $j$  can be expressed as $\cos(\frac{\theta}{2})I - i\sin(\frac{\theta}{2})Y_{j}$, where $Y_{j} = (-i)^{j-1} \prod_{i}^{j-1}({c_{2i-1}c_{2i}})c_{2j}$. It can be seen that the $R_{y}$ rotation is generated by \textit{odd}-degree fermionic operators. Consider the simplest case where $j = 1$, which corresponds to a non-Gaussian operation of degree $d = 1$. The transformation induced on an element of $\mathcal{L}_{n}^{(k)}$ is given the following Theorem:

\begin{theorem}

\label{Theorem: Odd Conjugation}

Let $V = e^{i \theta c_{\{1\}}}$ be a degree 1 non-Gaussian operation. Then conjugation of an element of $\mathcal{L}_{n}^{(k)}$ by V is realized by a linear map $R_{V}^{(k)}$:

\begin{itemize}
    \item For $k \ odd$: \\ 
    $R_{V}^{(k)}:$ $ \mathcal{L}_{n}^{(k)} \oplus \mathcal{L}_{n}^{(k+1)} \rightarrow  \mathcal{L}_{n}^{(k)} \oplus \mathcal{L}_{n}^{(k+1)}  $
   
    \item For $k \ even$: \\
    $R_{V}^{(k)}:$  $ \mathcal{L}_{n}^{(k)} \oplus \mathcal{L}_{n}^{(k-1)}  \rightarrow  \mathcal{L}_{n}^{(k)} \oplus \mathcal{L}_{n}^{(k-1)}  $

\end{itemize}
\end{theorem}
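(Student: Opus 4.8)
The plan is to reduce the claim entirely to the single‑parameter sign rule already established (the Lemma and Table~\ref{Table: Non-Gaussian}). Since $V$ is generated by $c_{\{1\}}=c_1$, we have $d=1$, so the overlap parameter $l=|\{k\}\cap\{1\}|$ can only take the values $0$ or $1$: namely $l=1$ precisely when the index $1$ lies in the support of the basis monomial $c_{\{k\}}$, and $l=0$ otherwise. First I would partition the basis set $L_n^{(k)}$ into the two subsets $S_0$ (monomials not supported on mode $1$) and $S_1$ (monomials supported on mode $1$), and classify each subset via Table~\ref{Table: Non-Gaussian} using $kd=k$.

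For $k$ odd (so $kd$ odd), elements of $S_0$ have $l=0$, hence by Table~\ref{Table: Non-Gaussian} are rotated into $\mathcal{L}_n^{(k)}\oplus\mathcal{L}_n^{(k+d-2l)}=\mathcal{L}_n^{(k)}\oplus\mathcal{L}_n^{(k+1)}$; concretely Equation~\ref{Equation: Non-gaussian rotation} gives $V^{\dagger}c_{\{k\}}V=\cos\theta\, c_{\{k\}}+i\sin\theta\, c_1 c_{\{k\}}$, with $c_1 c_{\{k\}}$ of degree $k+1$ since $1\notin\{k\}$. Elements of $S_1$ have $l=1$, hence commute with $c_1$ and are left invariant inside $\mathcal{L}_n^{(k)}$. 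Passing to linear spans, $R_V^{(k)}$ maps $\mathcal{L}_n^{(k)}$ into $\mathcal{L}_n^{(k)}\oplus\mathcal{L}_n^{(k+1)}$. Running the same argument with $k$ replaced by the even degree $k+1$ (where now $S_0$ is fixed and $S_1$ is rotated, with $c_1 c_{\{k+1\}}$ of degree $k$) shows $\mathcal{L}_n^{(k+1)}$ is mapped into $\mathcal{L}_n^{(k+1)}\oplus\mathcal{L}_n^{(k)}$. Hence $\mathcal{L}_n^{(k)}\oplus\mathcal{L}_n^{(k+1)}$ is invariant under conjugation by $V$, giving the first bullet; the case $k$ even is identical with the raising/lowering roles swapped, yielding invariance of $\mathcal{L}_n^{(k)}\oplus\mathcal{L}_n^{(k-1)}$.

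It remains to argue that $R_V^{(k)}$ is an \emph{automorphism} of the stated direct sum, not merely a map into it. This follows since conjugation by the unitary $V$ is invertible, with inverse conjugation by $V^{\dagger}=e^{-i\theta c_1}$, a degree‑$1$ non‑Gaussian operation with the identical block structure; so $R_V^{(k)}$ restricted to the finite‑dimensional invariant subspace $\mathcal{L}_n^{(k)}\oplus\mathcal{L}_n^{(k\pm1)}$ is injective, hence surjective. The main obstacle I anticipate is exactly this closure and surjectivity bookkeeping: one must check that the degree‑$k$ and degree‑$(k\pm1)$ components feed into each other and into no other grade, and handle the boundary cases $k=0$ and $k=2n$ where one of the spans $\mathcal{L}_n^{(k\pm1)}$ is empty or absent, so that the direct sum written in the statement is genuinely preserved and $R_V^{(k)}$ is well defined on it.
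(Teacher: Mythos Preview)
Your proposal is correct and follows essentially the same approach as the paper: specialise Table~\ref{Table: Non-Gaussian} to $d=1$, split $L_n^{(k)}$ into the $l=0$ and $l=1$ subsets, and read off the target spans in the odd and even cases. In fact you go slightly beyond the paper's brief argument by explicitly checking invariance in the other direction (that $\mathcal{L}_n^{(k\pm1)}$ maps back into the same direct sum) and by addressing surjectivity and the boundary cases $k=0,2n$; the paper's proof simply tabulates the $d=1$ specialisation and reads off the image of $\mathcal{L}_n^{(k)}$ without these extra checks.
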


\begin{proof}
For $d = 1$, the possible values of $l$ are 0 and 1. This implies each set $L_{n}^{(k)}$ is composed of two subsets, each of which is transformed according to Table 2. Specifically, for $d=1$,
 \vspace{11pt}
 
\hspace{-5mm}%
\setlength\tabcolsep{12pt}
    \begin{tabular}{|l|c|c|}
 \hline
d = 1 & k odd & k even \\
 \hline
 $l = 0$  & $\mathcal{L}_{n}^{(k)} \oplus \mathcal{L}_{n}^{(k+1)}$  &   $\mathcal{L}_{n}^{(k)}$   \\
\hline
 $l = 1$ & $\mathcal{L}_{n}^{(k)}$  &     $\mathcal{L}_{n}^{(k)} \oplus \mathcal{L}_{n}^{(k-1)}$  \\
 \hline
\end{tabular} 

 \vspace{11pt}

When k is odd, it can be seen that for each value of $l$, an element of $\mathcal{L}_{n}^{(k)}$ is transformed to the linear span $\mathcal{L}^{(k)} \oplus \mathcal{L}_{n}^{(k+1)}$. Similarly, when k is even, an element of $\mathcal{L}_{n}^{(k)}$ is transformed to the linear span $\mathcal{L}_{n}^{(k)} \oplus \mathcal{L}^{(k-1)}_{n}$.
\end{proof}

The  structure of the linear operator $R_{V}: \mathcal{L}_{n} \rightarrow \mathcal{L}_{n}$, with the transformation due to each $R_{V}^{(k)}$ is shown graphically in figure \ref{Figure: odd transform}a. Here each arrow represents the transformation of a subset of $L_{n}^{(k)}$ corresponding to a particular value of $l$ (where a subset is transformed to its own linear span, no arrow is shown). It can be seen that under conjugation by a degree one non-Gaussian operator (which is matchgate-equivalent to a single-qubit gate on the first qubit),  Theorem \ref{Theorem: Odd Conjugation} corroborates a result in \cite{Josza2009} which states that matchgates plus arbitrary single-qubit gates on the first qubit are efficiently classically simulable in the PI-SO setting. This can be seen by the fact that the measurement operator vector will contain contain terms in $\mathcal{L}_{n}^{(1)} \oplus \mathcal{L}_{n}^{(2)}$, as single-qubit gates will, under conjugation, induce transformations in the space  $\mathcal{L}_{n}^{(1)} \oplus \mathcal{L}_{n}^{(2)}$. Separately, matchgates will induce transformations within the linear spans $\mathcal{L}_{n}^{(1)}$ and $\mathcal{L}_{n}^{(2)}$, which means that the overall Pauli rank will be bounded as ${2n \choose 1} + {2n \choose 2}$.

For odd-degree non-Gaussian operations corresponding to single qubit gates on higher indexed qubits, the transformations induced become more complicated, as shown in Figure \ref{Figure: odd transform}b ($d=3$) and Figure \ref{Figure: odd transform}c ($d=5$). In these cases, efficient classical simulation is no longer possible as applying multiple single qubit gates interspersed with matchgates will in general induce transformation in the direct sum of linear spans which encompass an exponentially scaling portion of $L_{n}$.

\subsection{Two-qubit non-Gaussian operations}

To understand the transformations induced by two qubit gates, we make use of the fact that $U^{(2)} \in U(4)$ can be written (via a KAK decomposition \cite{kraus2001optimal}\cite{Brod2011}), in the following form: 
$$
\begin{aligned}
U^{(2)} =\left(U_{1} \otimes U_{2}\right) e^{i(a X \otimes X+b Y \otimes Y+c Z \otimes Z)}\left(U_{3} \otimes U_{4}\right)
\end{aligned}.
$$
Here the parameters $a$, $b$ and $c$ satisfy $a\geq b \geq c$ (to ensure the decomposition is unique.) Comparing this to the KAK decomposition for an arbitrary matchgate:

\begin{equation*}
 \left(e^{i \phi_{1} Z} \otimes e^{i \phi_{2} Z}\right) e^{i(a X \otimes X+b Y \otimes Y)}\left(e^{i \phi_{3} Z} \otimes e^{i \phi_{4} Z}\right)
\end{equation*}

Comparing the two expressions, it can be seen that apart from the single qubit gates applied before and after the non-local core (which are matchgate-equivalent to $R_{y}$ rotations), by elimination, the gate $U_{ZZ}(\theta)$ = $e^{i \theta ZZ}$  must also induce non-Gaussian dynamics. Indeed, $U_{ZZ}(\theta)$ is a two-qubit parity-preserving gate which has been identified as enabling universal quantum computation when combined by Gaussian operations in  \cite{bravyi2002fermionic}, and corresponds to the unitary evolution of a degree four product of Majorana spinors. Specifically, when acting on qubits $a$ and $b$ (not necessarily nearest neighbour), one has $e^{i\theta Z_{a}Z_{b}} = e^{i \theta c_{2a-1},c_{2a},c_{2b-1},c_{2b}}$. It induces the following transformation on elements of $\mathcal{L}^{(k)}_{n}$: 
\begin{theorem} \label{Theorem: Even Conjugation}
Let $U_{ZZ}(\theta) = e^{i\theta Z_{a}Z_{b}} = e^{i\theta c_{2a-1}c_{2a}c_{2b-1}c_{2b}}$ be a non-Gaussian gate acting on qubits a,b. Then for $n \geq 3$, conjugation of an element of $\mathcal{L}_{n}^{(k)}$ by $U_{ZZ}$ is realized by a linear map $R_{ZZ}^{(k)}$: 
\begin{itemize}

    \item For $k \in [1,2]: \\
   R_{ZZ}^{(k)}:\mathcal{L}^{(k)}_{n}  \oplus \mathcal{L}^{(k+2)}_{n}  \rightarrow \mathcal{L}^{(k)}_{n}  \oplus \mathcal{L}^{(k+2)}_{n} $
    \item For $k \in [3,2n-2]: \\
     R_{ZZ}^{(k)}:\mathcal{L}^{(k-2)}_{n}  \oplus \mathcal{L}^{(k)}_{n} \oplus \mathcal{L}^{(k+2)}_{n} \rightarrow \\  
    \mathcal{L}^{(k-2)}_{n}  \oplus \mathcal{L}^{(k)}_{n} \oplus \mathcal{L}^{(k+2)}_{n}$
    \item for $k \in [2n-1,2n]: \\ R_{ZZ}^{(k)}: 
    \mathcal{L}^{(k)}  \oplus \mathcal{L}^{(k-2)}_{n} \rightarrow \mathcal{L}^{(k)}  \oplus \mathcal{L}^{(k-2)}_{n} $
    \label{Theorem: 4} 
\end{itemize}
\end{theorem}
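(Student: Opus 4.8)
The plan is to specialise the general machinery behind Lemma~1 and Table~\ref{Table: Non-Gaussian} to the single case $d=4$, with $c_{\{d\}} = c_{2a-1}c_{2a}c_{2b-1}c_{2b}$. First I would record that $U_{ZZ}(\theta)$ is literally of the form $e^{i\theta c_{\{d\}}}$ covered by Equation~\ref{Equation: Non-Gaussian Conjugation}: since $\{d\}$ consists of four \emph{distinct} indices (we use $a\neq b$ here), $c_{\{d\}}^{2}=(-1)^{\binom{4}{2}}I=I$, so $U_{ZZ}(\theta)=\cos\theta\,I+i\sin\theta\,c_{\{d\}}$ and the conjugation identity applies verbatim (the factor $\theta$ versus $\theta/2$ is an immaterial reparametrisation that does not affect which linear spans appear). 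Because $d=4$, the product $kd=4k$ is even for every $k$, so we are always in the $kd$-even column of Table~\ref{Table: Non-Gaussian}: a basis monomial $c_{\{k\}}$ is left invariant (identity channel) when $l=|\{k\}\cap\{d\}|$ is even, and is rotated, via Equation~\ref{Equation: Non-gaussian rotation}, into $\mathcal{L}_{n}^{(k)}\oplus\mathcal{L}_{n}^{(k+d-2l)}=\mathcal{L}_{n}^{(k)}\oplus\mathcal{L}_{n}^{(k+4-2l)}$ when $l$ is odd.

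Next I would determine, in each regime of $k$, which odd values of $l$ actually occur among the monomials $c_{\{k\}}\in L_{n}^{(k)}$. A monomial with intersection size $l$ places $l$ of its indices inside the four-element set $\{d\}$ and the remaining $k-l$ inside the $(2n-4)$-element complement, so $l$ ranges over $\max(0,\,k-2n+4)\le l\le\min(k,4)$. Enumerating the odd values in this window gives the three cases of the statement: for $k\in[1,2]$ only $l=1$ is available, so the image lies in $\mathcal{L}_{n}^{(k)}\oplus\mathcal{L}_{n}^{(k+2)}$; for $k\in[3,2n-2]$ the relevant odd values are $l=1$ and $l=3$, whose target spans $\mathcal{L}_{n}^{(k)}\oplus\mathcal{L}_{n}^{(k+2)}$ and $\mathcal{L}_{n}^{(k)}\oplus\mathcal{L}_{n}^{(k-2)}$ combine to $\mathcal{L}_{n}^{(k-2)}\oplus\mathcal{L}_{n}^{(k)}\oplus\mathcal{L}_{n}^{(k+2)}$; and for $k\in[2n-1,2n]$ only $l=3$ contributes a rotation (with $l=4$ giving the identity channel), landing in $\mathcal{L}_{n}^{(k-2)}\oplus\mathcal{L}_{n}^{(k)}$. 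The hypothesis $n\ge 3$ ensures $\{d\}$ is a proper subset of the $2n$ spinors, so none of these windows degenerate.

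Since an arbitrary element of $\mathcal{L}_{n}^{(k)}$ is a linear combination of basis monomials $c_{\{k\}}$, each sitting in one $l$-subset, and each such $c_{\{k\}}$ is mapped (noting that $c_{\{d\}}c_{\{k\}}$ collapses to a single degree-$(k+4-2l)$ monomial, hence lies in $L_{n}^{(k+4-2l)}$) into the corresponding target span, the whole of $\mathcal{L}_{n}^{(k)}$ is sent into the direct sum $W$ of the spans listed above. To upgrade this to the assertion that $R_{ZZ}^{(k)}$ is a linear map $W\to W$ with the \emph{same} space on both sides, I would observe that $U_{ZZ}(\theta)^{-1}=U_{ZZ}(-\theta)$ has the identical algebraic form, so conjugation by it also maps $W$ into $W$; being mutually inverse on the full operator space, the two restrictions are mutually inverse on $W$, hence each is a bijection of $W$.

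The main obstacle I expect is purely combinatorial bookkeeping rather than any hard computation: correctly reading off which odd $l$ are realisable at the edges of each regime (in particular $k$ near $1,2$ and near $2n$), and checking that the advertised spans $\mathcal{L}_{n}^{(k\pm2)}$ have degrees within $[0,2n]$ — which is precisely what forces the three-way split of the statement. At a few boundary values (e.g.\ $k=2n-2$, where $l=1$ cannot occur, or $k=2n$, where only the identity channel survives) the listed span is a genuine over-approximation of the true invariant subspace, but it remains a valid invariant space, so the statement as phrased still holds; I would remark on this to avoid the impression of an inconsistency.
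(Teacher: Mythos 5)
Your proposal is correct and follows essentially the same route as the paper's proof: specialise Lemma~1 and Table~\ref{Table: Non-Gaussian} to $d=4$, note that $4k$ is always even so only the parity of $l$ matters, and enumerate which odd $l$ are realisable in each range of $k$ via $\max(0,k-2n+4)\le l\le\min(k,4)$. Your explicit lower bound on $l$, the inverse-conjugation argument for bijectivity of $R_{ZZ}^{(k)}$ on $W$, and the remark that the stated spans are harmless over-approximations at $k=2n-2$ and $k=2n$ are refinements the paper's proof glosses over, but they do not change the approach.
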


\begin{proof}
    
For d = 4, the possible values of l are 0,1,2,3,4. Hence the set $L_{n}^{(k)}$ consists of five subsets for each value of l. Given the product 4k is always even, we need only consider the even column of Table 2, for which we enumerate the following transformations:
 \vspace{11pt}
 
\begin{tabularx}{0.47\textwidth} {
  | >{\raggedright\arraybackslash}X 
  | >{\centering\arraybackslash}X 
  | >{\raggedleft\arraybackslash}X | }
 \hline
 $ d = 4 $ &  4k even \\
 \hline
$l = 0$  &   $\mathcal{L}_{n}^{(k)}$   \\
\hline
$l = 1$ &     $\mathcal{L}_{n}^{(k)} \oplus \mathcal{L}_{n}^{(k+2)}$  \\
\hline
$l = 2$  &  $\mathcal{L}_{n}^{(k)}$   \\
\hline
$l = 3$ &    $\mathcal{L}_{n}^{(k)} \oplus \mathcal{L}^{(k-2)}_{n}$  \\
\hline
$l = 4$ &    $\mathcal{L}_{n}^{(k)}$  \\
 \hline
\end{tabularx} 
\vspace{11pt}

For the first case where k = 1 or 2, $l$ is limited to be 0,1, or 2. For any of these values, $\mathcal{L}^{(k)}_{n}$ is transformed to the linear span $\mathcal{L}^{(k)}_{n} \oplus \mathcal{L}^{(k+2)}_{n}$. 
 For k = 2n-1, 2n-2: $l$ can be either l = 4, 3 or 2, which means $\mathcal{L}^{(k)}_{n}$ is transformed to $\mathcal{L}^{(k)} \oplus \mathcal{L}^{(k-2)}_{n}$. Finally, in the general case where l can be 0,1,2,3 or 4, ${L}^{(k)}_{n}$ is transformed to $\mathcal{L}^{(k-2)} \oplus \mathcal{L}_{n}^{(k)} \oplus \mathcal{L}_{n}^{(k+2)}$.
\end{proof}

\begin{figure}[ht] 
    \centering
    \includegraphics[width=0.48\textwidth]{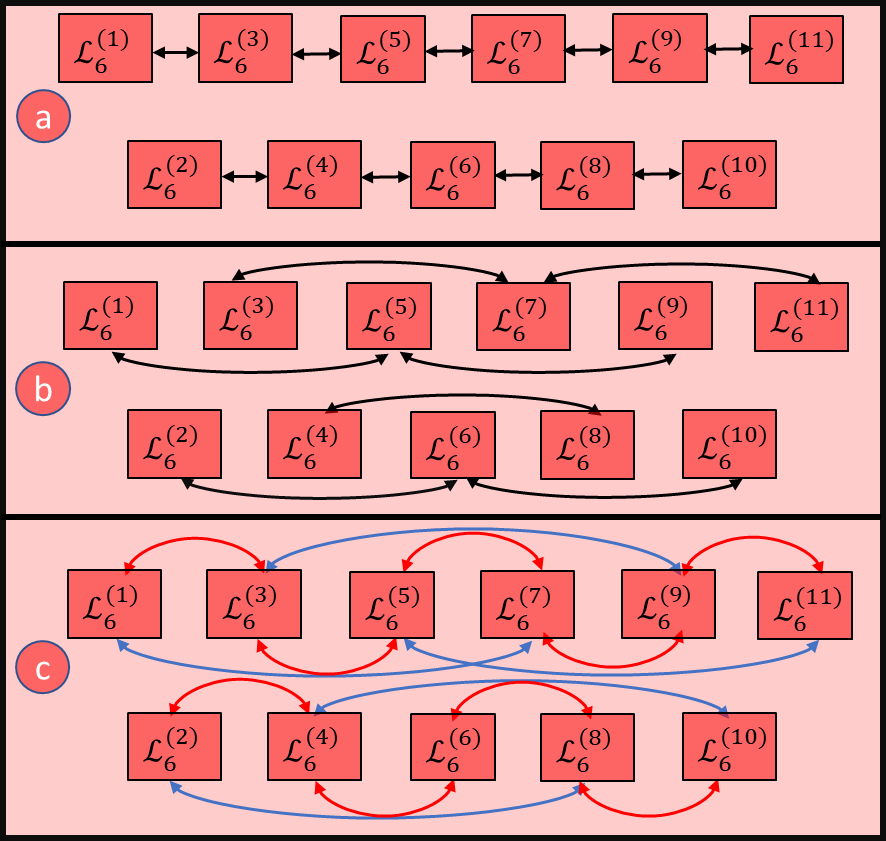}
    \caption{Transformation of an element of $\mathcal{L}_{6}^{(k)}$ under conjugation by an even-degree non-Gaussian operation. Each arrow indicates the linear span to which a subset of $\mathcal{L}_{n}^{(k)}$ is transformed, corresponding to the possible values of $l$ for that basis set. (a) $d = 4$,  (b) $d = 6$, (c) $d = 8$ (Coloured arrows are for visualisation only).}
    \label{Figure: even_transform}
\end{figure}

A schematic of the overall operation $R_{ZZ}: \mathcal{L}_{n} \rightarrow \mathcal{L}_{n}$ is shown in Figure \ref{Figure: even_transform}a. We can see that starting with a measurement operator vector in a given linear span, the action of conjugation, via the linear operator $R_{ZZ}$ will in general access adjacent equal parity spans. This gives the scheme to simulate ZZ gates - keep track of the new basis elements which are accessed as each $R_{ZZ}$ gate is applied, and update them as required. Such a method could be applied to any gate which is matchgate equivalent to $U_{ZZ}$ including SWAP, CZ, and CPhase.

As well as these, the transformations induced by higher degree non-Gaussian operations, specifically d = 6 and d = 8 are shown in Figure 3 for completeness. In general, these  would correspond to three and four-qubit gates in the Jordan-Wigner mapping and hence are not necessarily of practical relevance. 


\section{Classical Simulation of Matchgate + ZZ circuits} 
In the previous section, we showed how it is possible to simulate the action of a parity-preserving non-matchgate (a gate which is matchgate equivalent to $U_{ZZ}$) via the superoperator $R_{ZZ}$. Let us consider how adding such a gates affects the Pauli rank, and therefore the difficult of simulation. We denote an n-qubit `matchgate + ZZ' circuit containing $N$ gates, of which $m$ are parity-preserving non-matchgates and $N-m$ are matchgates, as $U_{MG+ZZ}$. This can be written in the following generic form: 
\begin{equation}
U_{MG + ZZ} = U_{MG}^{(m)}U_{ZZ}^{(m)}...U_{MG}^{(1)}U_{ZZ}^{(1)}U_{MG}^{(0)},
\end{equation}
where there are $m + 1$ nearest-neighbour matchgate circuits $U_{MG}$ interspersed with $m$ $U_{ZZ}$ gates.

A computation with this circuit in the PI-SO setting can then be written in Liouville notation as follows:

\begin{equation} \label{eqn:liouville}
\langle Z_{j} \rangle = \langle \langle Z_{j} \vert R_{MG+ZZ}\vert \rho_{0} \rangle \rangle. 
\end{equation}

where $\rho_{0}$ is a product state and $R_{MG+ZZ} = R_{MG}^{(m)}R_{ZZ}^{(m)}R_{MG}^{(m-1)}...R_{ZZ}^{(1)}R_{MG}^{(0)}$ are the corresponding superoperators. As has been mentioned, it is possible to evaluate an expression of the form of equation \ref{eqn:liouville} using Algorithm 1. This will have a scaling of $\mathcal{O}(\chi_{t})$. Hence, to determine the overall scaling of simulating a particular `matchgate + ZZ' circuit, we must find suitable bounds on $\chi_{t}$. This is the subject of the next section.

\section{Bounds on simulation time costs}




   

\subsection{General bounds for \texorpdfstring{$\chi_{t}(N,n,m)$}{TEXT}}
To begin bounding $\chi_{t}$ for a `matchgate + ZZ' circuit, we assume that we apply the 'Heisenberg picture' approach to matrix multiplication and that our measurement vector contains terms solely in $\mathcal{P}_{n}^{(2)}$. With these assumptions, the application of $m$ ZZ gates will at most access $m+1$ linear spans of even parity, up to a maximum of $\lfloor \frac{n}{2} \rfloor - 1$. The simplest bound of $\chi_{t}$ is therefore $N\chi_{max}$, where $\chi_{max}$ is the sum of the sizes of each accessed linear span. Explicitly: 

\begin{equation} \label{eqn: expression for chi_{max}}
\chi_{max} = \sum_{i=1}^{m+1}\vert L_{n}^{(2i)}\vert = \sum_{i=1}^{m+1}{2n \choose 2i}.
\end{equation} 
Such a partial sum has no closed form, and varies depending on whether $m$ is fixed as a function of $n$ or varies correspondingly. We consider both cases below. 

\subsubsection{Bounds for fixed \texorpdfstring{$m$}{TEXT}}

If $m$ is fixed in the above expression, we can show that the partial sum $\chi_{max}$ scales polynomially in $n$ as $n \rightarrow \infty$:

\begin{theorem}
The following relation holds:
\begin{equation}
\chi_{max} = \sum_{i=1}^{m+1}{2n\choose{2i}} < {2n\choose{2m+2}}\frac{1}{1-r},
\end{equation}
 where $r = (\frac{2m}{2n-2m-1})^{2}. $

\end{theorem}

\begin{proof}
Consider that for $k < \lfloor \frac{p}{2} \rfloor$: 
\begin{equation} \label{eqn: infinite series2}
 \frac{{p\choose{k}} + {p\choose{k-2}} + ...}{{p\choose{k}}} = 1 + \frac{k(k-1)}{(p - k +1)(p - k + 2)} + ...,
\end{equation}
where the numerator of the left hand side is the partial sum we wish to bound. Notice that we can upper bound the right hand side above by: 
\begin{equation} 
  1 + (\frac{k}{p-k+1})^{2} + ...,
\end{equation} 
which is an infinite geometric series with ratio $ r = (\frac{k}{p-k+1})^{2}$. This series converges to $\frac{1}{1-r}$, for $0 \leq r < 1$. Using the substitution that $p = 2n$ and $k = 2m + 2$, $r = (\frac{2m+2}{2n-2m-1})^{2}$ and the series converges as long as $0 \leq m \leq \lfloor\frac{n}{2}\rfloor - 1$. Rearranging equation \ref{eqn: infinite series2} completes the proof.
\end{proof}

Let us consider the asymptotic scaling as $n \rightarrow \infty$ with fixed $m$. We can see that in this limit, $r \rightarrow 0$, and the scaling is dominated by ${2n\choose{2m+2}}$. Hence we can conclude the  following results, using the relation that ${2n\choose{2m+2}} < \frac{ne}{m+1}^{2m+2}$, where $e$ is Euler's constant: 

\begin{corollary}
    For fixed $m < m_{c}$, as $n \rightarrow \infty$, $\chi_{max}$ scales as $\mathcal{O}((\frac{ne}{m+1})^{2m+2})$.
\end{corollary}


This implies the total simulation cost of the algorithm, which is on the order of $\sim \mathcal{O}(\chi_{t})$ will in the worst case scale polynomially in $n$, for fixed $m$.

\subsection{Bounds for variable \texorpdfstring{$m$}{TEXT}}

We have considered fixed $m$ to this point. We now consider the scaling if $m$ increases as a function of $n$ in the limit where $n \rightarrow \infty$. We make use of the following bound, proved in \cite{MacWilliams1977TheTO}, for this purpose: 

\begin{lemma} 
The following relation holds for $0 \leq k \leq n$:
$$
S = \sum_{i=0}^{k}{2n \choose i} \leq  2^{2nH(\frac{k}{2n})}, 
$$

where $ H(x) = -x \log_{2}(x) - (1-x) \log_{2}(1-x)$ is the binary entropy function.
\end{lemma} 

As we are only concerned with even terms of this sum we can make the following modification. Split S into even and odd terms $S = S_{o} + S_{e}$, where $S_{o} = \sum_{i=1}^{m+1} {2n \choose 2i - 1}$ and $S_{e} = \sum_{i=0}^{m+1} {2n \choose 2i }$. Here, $S_{e}$ is the quantity we wish to bound. The ratio $\frac{S}{S_{e}} = 1 + \frac{S_{o}}{S_{e}} = 1 + O(n^{-1})$ to leading order, as the partial sum of . We can rearrange this expression to give a bound for $S_{e}:$

\begin{equation}
S_{e}  \leq \frac{1}{1 + O(n^{-1})}2^{2nH(\frac{k}{2n})}
\end{equation}

Finally, noting that $k = 2m+2$, We can bound $\chi_{max}$ in the asymptotic limit as follows: 

\begin{theorem}
For $m \leq \lfloor \frac{n}{2} \rfloor - 1:$
\begin{equation}
\chi_{max} < 2^{2nH(\frac{m+1}{n})}
\end{equation}
\end{theorem}

\subsection{Tighter bounds for structured circuits}
It is possible to give some tighter bounds on the simulation time complexity, if we assume some natural structure in the circuit. The motivation here is that the Pauli rank increases over the course of the circuit instead of being fixed at $\chi_{max}$. If the ZZ gates are equally spaced, we will have a staircase structure. Evaluating $\chi_{t}$ is then a question of splitting the staircase into horizontal steps and adding each section. In general the width of the kth step (starting from the bottom) each step will be $\frac{kN}{m+1}$, and the height will be $\vert L_{2k} \vert$. Hence we can approximate $\chi_{t}$ in the following way:

\begin{equation}  \label{eqn:partial sum}
\begin{aligned}
    \chi_{t}(n,m) \approx \\  
    & = \frac{N}{m+1}\sum_{i=1}^{m+1}(m+2-i)|L^{(2i)}_{n}| \\ 
    & = \frac{N}{m+1}\sum_{s=1}^{m}(m+2-i) {2n\choose{2i}},
\end{aligned}
\end{equation}
where $N = mT + m$, for some constant T which is the number of matchgates applied between each ZZ gate. 
This partial sum can be upper bounded by an arithmetico-geometric series as follows:

\begin{lemma}
\begin{equation}
\sum_{s=1}^{m+1}(m+2-i){2n\choose{2i}} < {2n\choose{2m+2}}[1+\frac{1}{(1-r)^{2}}],
\end{equation}
 where $r = (\frac{2m}{2n-2m+1})^{2}. $

\end{lemma}

\begin{proof}
Consider that for $k < \frac{p}{2}$: 
\begin{equation} \label{eqn: infinite series}
\begin{split}
 \frac{{p\choose{k}} + 2{p\choose{k-2}} + 3{p\choose{k-4}}+\dots}{{p\choose{k}}} \
& \\
= 1 + 2\frac{k(k-1)}{(p - k +1)(p - k + 2)} + \dotsb
\end{split}
\end{equation}
We notice that we can bound this above by: 
\begin{equation} 
S_{n} = 1 + 2(\frac{k}{p-k+1})^{2} + ...,
\end{equation} 
which is an infinite arithmetico-geometric series with ratio $ r = (\frac{k}{p-k+1})^{2}$:
\begin{equation}
S_{n} = 1 + \sum_{i = 1}^{\infty}(i+1)r^{i}
\end{equation}

The sum converges to $[1 + \frac{1}{(1-r)^{2}}]$, for $0 \leq r < 1$. Using the substitution that $p = 2n$ and $k = 2m + 2$, $r = (\frac{2m+2}{2n-2m-1})^{2}$ and the series converges as long as $0 \leq m \leq \lfloor\frac{n}{2}\rfloor - 1$. The numerator of the fraction in equation \ref{eqn: infinite series} is equal to the partial sum we wish to bound. Hence rearranging this expression gives us the desired result. 
\end{proof}
We use this result in Theorem \ref{Theorem 6: Bound} to give a closed-form expression for a bound to $\chi_{max}$.

\begin{theorem}
\label{Theorem 6: Bound} For $m \leq m_{c}= \lfloor\frac{n}{2}\rfloor -1$:
\begin{equation}
    \chi_{max}(n,m) < \frac{N}{m+1}{2n\choose{2m+2}}[1+\frac{1}{(1-r)^{2}}],
\end{equation}

\end{theorem}
which means for structured circuits there is an approximately $\frac{1}{m}$ improvement over the loose bound derived in the previous section.

\section{Fermi-Hubbard Model Simulation using Pauli-basis Simulation}

\begin{figure*}[ht]
    \centering
    \includegraphics[width=\textwidth]{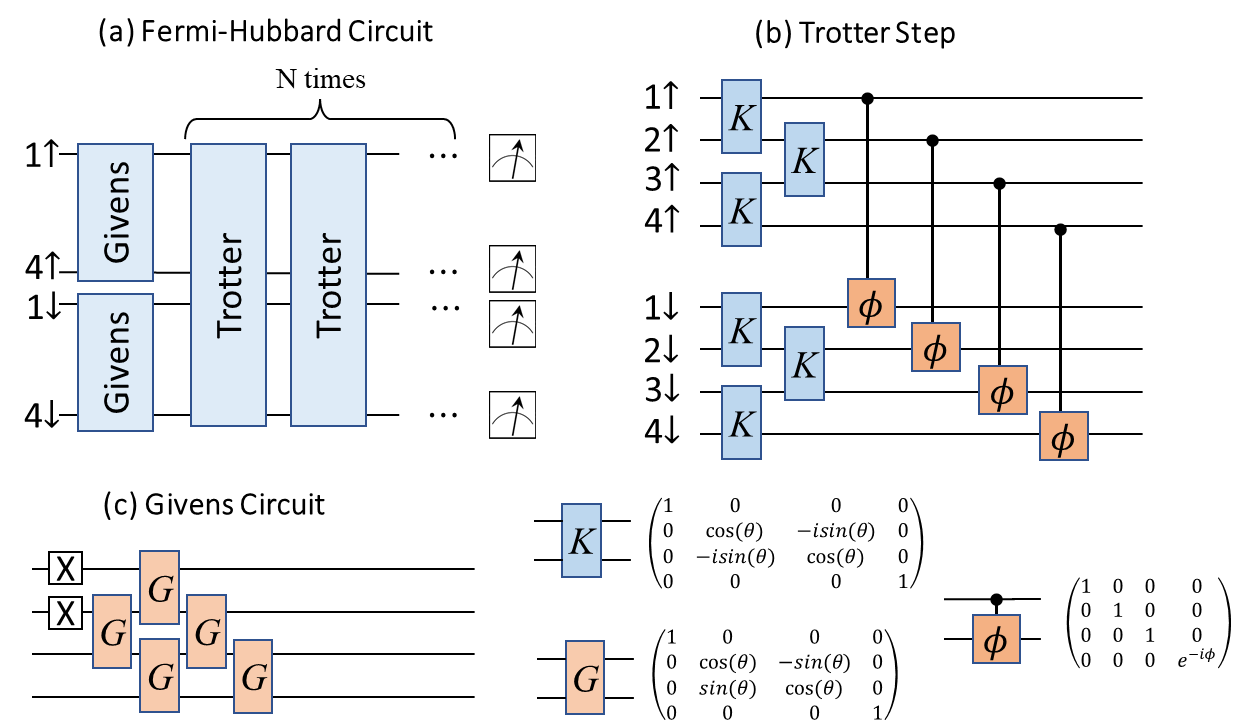}
    \caption{Circuit used to simulate time evolution of $H_{FH}$ given in Equation \ref{Equation: HFH}, from \cite{arute2020observation}. (a) Overall circuit, consisting of initialisation of Hartree-Fock state using Given's rotations as shown in (c). This is followed by $m$ Trotter steps, and finally, measurement (b) Trotter step, consisting of odd-even hopping gates, even-odd hopping gates (blue) and onsite interactions (orange). In the interaction-limited case, only one cPhase gate connects the top and bottom registers.}
  
\end{figure*}

\begin{figure*}[t] 
    \centering
    \hspace*{-1.5cm}
    \includegraphics[width = 1.2\textwidth]{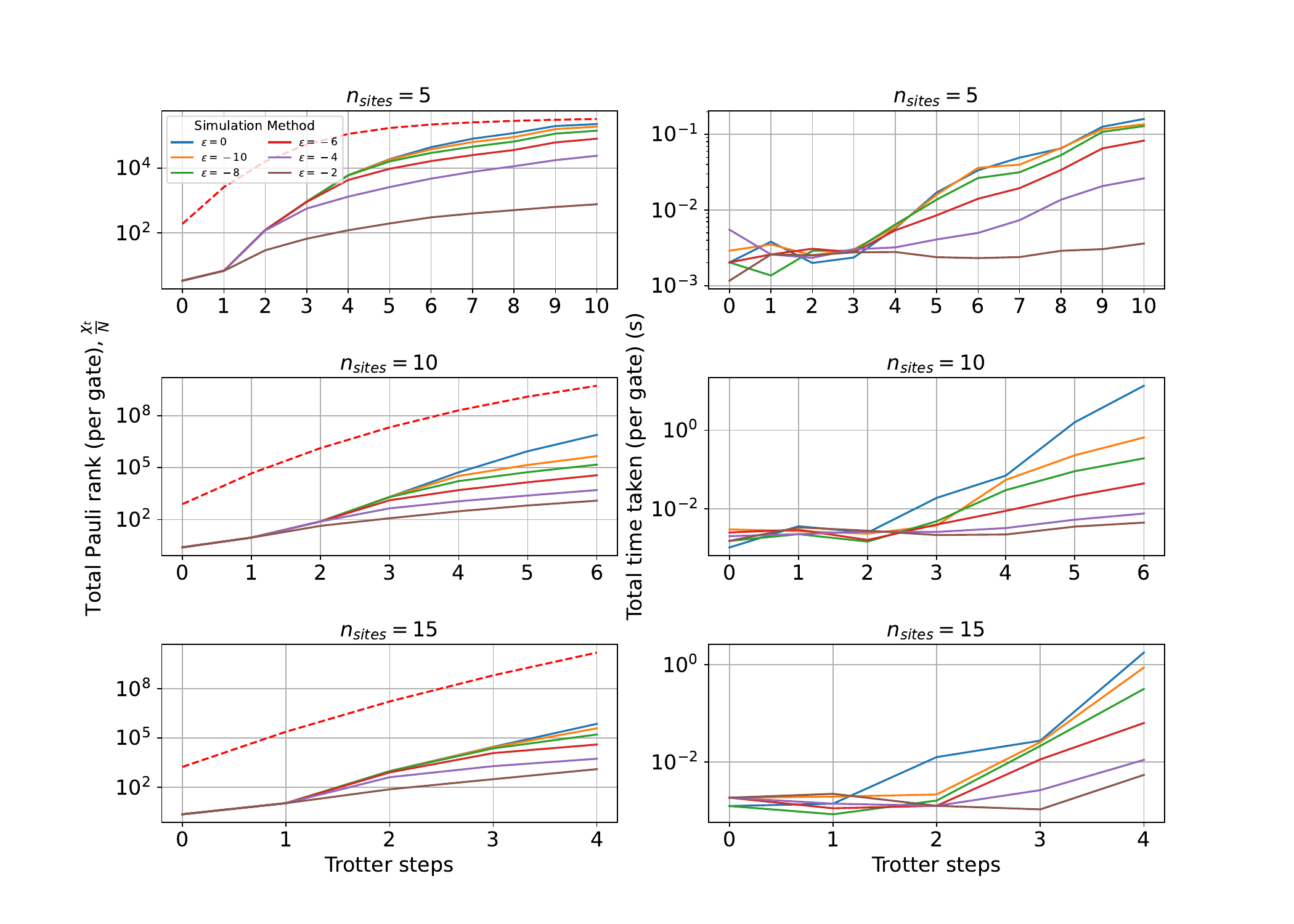}
    \caption{Total Pauli rank/Time (per gate) required to evaluate $\langle Z_{j} \rangle$ as a function of the number of Trotter Steps of an interaction limited version of the circuit shown in Figure 6. The red dashed line indicates the saturated upper bound for $\chi_{t}$ for the particular number of Trotter steps. For simulations with a pruning threshold, we can greatly reduce the simulation cost, however, we incur an error $\delta$ which varies as shown in Figure \ref{Figure: delta vs pauli rank}. }
\label{Figure: Simulation}
\end{figure*}

\begin{figure}[ht]
    \centering
    \hspace*{-0.5cm}
\includegraphics[width=1.1\columnwidth]{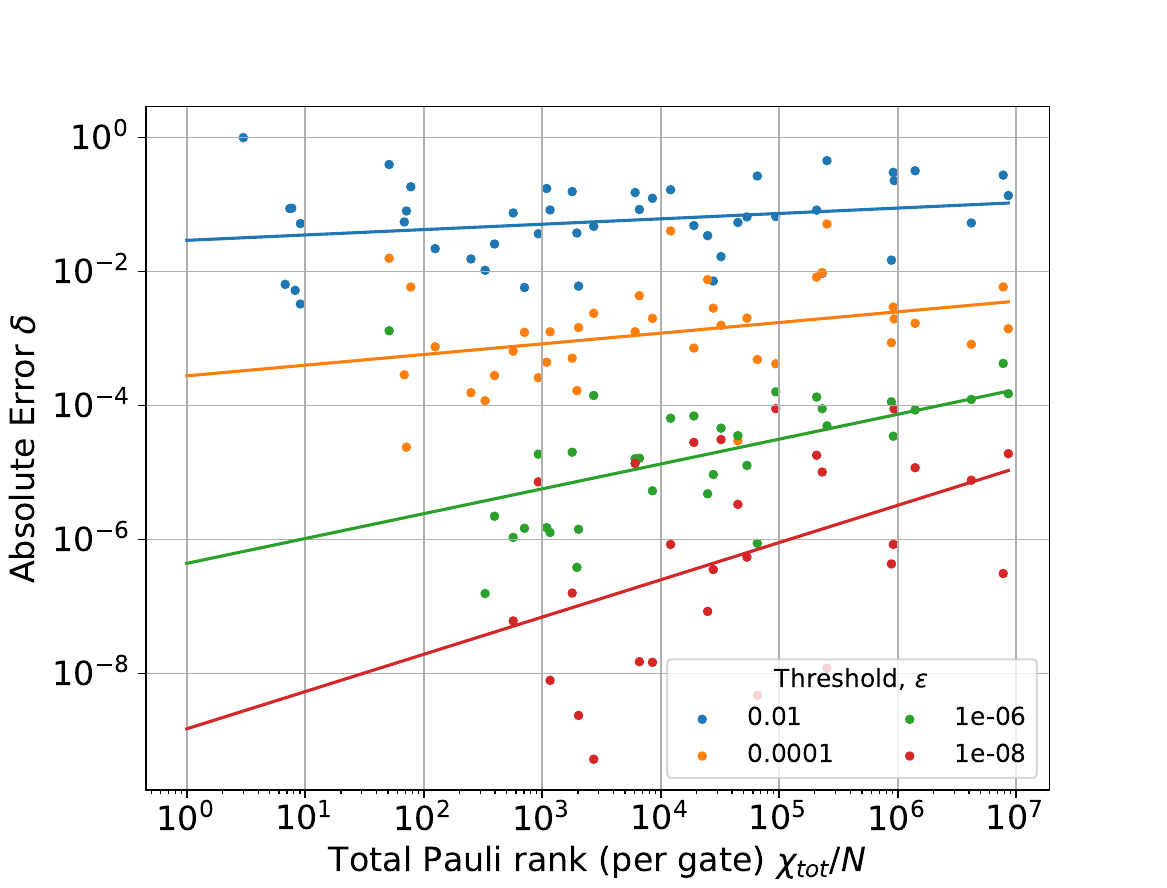}
    \caption{Absolute Error of expectation value for different values of pruning threshold, as a function of the total Pauli rank, $\chi_{tot}$ of a simulation. Each point corresponds to a Trotter circuit simulated with different values of $n$ and $T$. The same set of circuits is used for each threshold value, and where the absolute error was zero, no point is shown.}
    \label{Figure: delta vs pauli rank}
\end{figure}

In this section, we demonstrate how a 'matchgate + ZZ' circuit arising in the context of the Fermi-Hubbard model can be simulated using Algorithm 1, and verify the above asymptotic scalings. 
The Fermi-Hubbard model is a widely used model in condensed matter physics to understand the properties of correlated fermionic systems. A 1D model consists of $p$ sites, arranged in a line, with $q < 2p$ fermions distributed between these sites. Each site can contain a spin-up fermion, a spin-down fermion or both. Fermions can \textit{hop} between adjacent sites, characterised by an energy $J$, and interact with a coulombic repulsion each other if a site is doubly occupied, characterised by an energy $U$. The Hamiltonian of the system is given below:
\begin{equation} 
\begin{aligned} H_{FH}=&-J \sum_{j=1}^{L-1} \sum_{\nu=\uparrow, \downarrow} a_{j, \nu}^{\dagger} a_{j+1, \nu}+\text { h.c. } \\ &+U \sum_{j=1}^{L} n_{j, \uparrow} n_{j, \downarrow}, \end{aligned}
\label{Equation: HFH}
\end{equation}
where the first term corresponds to hopping and the second term to the onsite interaction terms respectively.

Simulating this Hamiltonian can be achieved as follows. Firstly, the initial state is prepared using a Givens rotation circuit as shown in Figure 4c. This consists of a ladder of matchgates of the form $G$, for each electron being simulated. In this case, we consider two electrons. The parameters of each gate are chosen randomly for our purposes, though in practice they are carefully chosen to correspond to a Hartree-Fock state. Next, the dynamics induced by $H_{FH}$ are approximated using the Lie product formula:
$$
e^{i(H_{hop} + H_{int})t} = \lim_{T\to\infty}\{e^{iH_{hop}\frac{t}{N}}e^{iH_{int}\frac{t}{T}}\}^{T},
$$
which approximates the dynamics as the product of the individual hopping and interacting unitary operators, applied for a fractional amount of time. Each individual unitary can be decomposed as a circuit on $2n$ qubits, where the first and second register of $n$ qubits correspond to spin up and spin down electrons respectively. The hopping terms are implemented as $K$ rotations (also matchgate circuits) on each register, and the interaction terms is implemented as CPhase gates acting between registers. This is shown in Figure 4b. This process is repeated for a particular number of steps known as the trotter number $T$. This gives an overall circuit structure as shown in Figure 4a. 

Writing the circuit explicitly, we can see that our circuit factorises into a form which is amenable to Algorithm 1:  $U_{FH} = U^{(T)}_{MG}U^{(T)}_{\phi}\dots U_{MG}^{(1)}U_{\phi}^{(1)}U^{(0)}_{MG}$, where $U_{MG}$ are matchgate circuits corresponding to the Givens and K rotations,  and $U_{\phi}$ are CPhase gates. Indeed, CPhase gates are matchgate equivalent to $U_{ZZ} = e^{i\theta ZZ}$, specifically taking the form $\textrm{CPhase}(\theta)$ = $e^{i \theta ZZ} G(R_{z}(2\theta),I)$.  The fact that these gates act on distant qubits a and b poses no issue (in this case) as $R_{z}^{a}\otimes I$ and $I \otimes R_{z}^{b}$ be expressed as G($R_{z}(\theta), R_{z}(\theta)$)) G($R_{z}(\theta), R_{z}(-\theta)$) respectively, and the transformations induced by $e^{i \theta Z_{a}Z_{b}}$ hold for any qubits a and b, according to Theorem \ref{Theorem: 4}. 

To test Algorithm 1, we consider the simulation of Trotter circuits with limited interactions, where only a finite number of CPhase are present in each Trotter step. This corresponds to removing on-site interaction for particular sites in the Fermi-Hubbard model. This paradigm may be useful for error mitigation methods which use efficient simulation of `training circuits' to understand and mitigate errors in a related universal circuit \cite{montanaro2021error}. Typically, such training circuits would have to be from gate sets which are classically simulable but may not be functionally similar to the circuit being implemented. By extending the reach of classical simulation methods to universal circuits it is possible that such methods could perform better.

In Figure \ref{Figure: Simulation} we show the scaling of $\chi_{tot}$ (per gate) as a function of the number of Trotter steps (blue line), each step containing a single $\textrm{CPhase}$ gate between the first qubit of the upper and lower register. In this case, the number of Trotter steps is equal to the number of parity-preserving non-matchgates. We simulate models consisting of $n_{sites} = 5,10,15$ sites (corresponding to $n$ = 10, 20, 30 qubits respectively), for a number of Trotter steps $T$ in the range [0,10]. We further normalise the rank and time by the number of gates, the number of which is given by $N = 2n_{sites}(T+1)$. From Figure 8, we verify our prediction that the simulator scales polynomially in the qubit number $n$, but exponentially in the number of parity-preserving non-matchgates. Furthermore, ignoring transient effects for small circuit depths, the time taken scales in proportion with the total Pauli rank.

\subsection{Heuristic Improvements}
We now mention two methods to improve the proposed simulation method. The first improvement is to use what is sometimes known as an \textit{interaction picture} approach, which describes applying R operators to both $\vert \rho_{0} \rangle \rangle$ and $\langle\langle Z_{j} \vert$.  This reduces the number of iterations with a large number of multiplications which often occur in the final gates of a layered circuit. It also introduces parallelism as the two directions of multiplication can be computed independently. To initialise  $\vert \rho_{0} \rangle \rangle$ using this method, one need only consider elements in the linear spans which will be accessed following the application of $m$ parity-preserving non-matchgates gates. For computational zero states, this will be Pauli operators containing up to $m$ $Z$ operators. 

The second method is to apply \textit{pruning}, where coefficients from $\langle\langle M\vert$ which are below some threshold value $\epsilon$ are removed before the application of the next super-operator. At the cost of some error in the final expectation value, this heuristic has the potential to greatly reduce the time cost for certain circuits [see figure 5].  The relationship between the incurred error $(\delta)$ and the total pruning-free ($\epsilon = 0$) Pauli rank (per gate) is plotted in Figure 6. We consider the range of ranks accessed in our simulations, for different values of the pruning threshold $(\epsilon)$. Each point corresponds to a randomly initialised Trotter circuit with a particular value of $n$ and $T$ in the ranges [4,7] and [0,8] respectively. The same circuit is simulated for each value of the threshold $\epsilon$.

\section{Discussion}

The motivating question for this research was whether the polynomial resource cost of nearest-neighbour matchgate circuit simulation techniques could be preserved for circuits supplemented with a few universality-enabling gates. Algorithm 1 shows that at the very least, polynomial scaling is still possible for up to $\lfloor\frac{n}{2}\rfloor - 1$ such gates. As could be expected,  the scaling quickly becomes exponential as more universality-enabling resources are added. It is however still possible that the exponent with respect to $m$ could be suppressed with more sophisticated simulation techniques. For example, one line of research could be the use of tensor networks in simulating universal circuits \cite{pan2021simulating}. In particular there are several schemes which efficiently encode fermionic gaussian states (ie states formed from matchgate circuits) as low-rank matrix-product states \cite{schuch2019matrix} \cite{fishman2015compression}. We make note that there have recently been approaches which make analogies to the stabilizer decomposition approach. We direct the reader to these papers. \cite{reardon2023improved}

We also mention here a few considerations for further investigation. Firstly, as the classical simulation results and scaling properties derived here stem from the fermionic description of operators, we can reason that using a different representation of these operators, such as the Bravyi-Kitaev, or a compact mapping for 2D fermionic grids \cite{derby2021compact}, will lead to equivalent simulation methods for different sets of gates. Albeit such gates will no longer necessarily be nearest-neighbour matchgates or the identified parity-preserving non-matchgates. A detailed exposition of such gate sets is an interesting question.

\section{Conclusion}
In this paper, we have shown that it is possible to extend the PI-SO matchgate simulation method introduced by Jozsa and Miyake \cite{Josza2008} to simulate a matchgate circuit containing a few parity-preserving non-matchgates (ZZ gates). At a high level, this is possible because of two observations. The first observation is that Gaussian operations induce linear transformation within a linear span $\mathcal{L}_{n}^{(k)}$, the vector space spanned by degree $k$ majorana monomials. The second observation is that non-Gaussian operations induce transformations across \textit{multiple} such linear spans. This gives a simple scheme to account for the addition of non-matchgates: adaptively increase the size of the vector space encompassing the dynamics over the course of a circuit. Using a sparse simulation method, we show that the time complexity of simulating a `matchgate + ZZ' depends on $N$, $n$, and $m$, the number of gates in total, the number of qubits and the number of $ZZ$ gates. For $m \leq \lfloor\frac{n}{2}\rfloor - 1$, we show the time cost scales as  $\sim \mathcal{O}(N(\frac{n}{m+1})^{2m+2})$ for fixed $m$ as $n \rightarrow \infty$. For variable $m$, however, we find a scaling of the form $2^{2nH(\frac{m+1}{2n}})$. Finally, We showcase this method in a simulation of Trotter circuits arising from simulation of an interaction-limited Fermi-Hubbard model. 

\section{Acknowledgements}
We would like to thank Daniel Brod for helpful discussions. We acknowledge funding from
the EPSRC Prosperity Partnership in Quantum Software for
Modelling and Simulation (Grant No. EP/S005021/1).

\bibliographystyle{unsrtnat}
\bibliography{bibliography.bib}
\onecolumn

\appendix

\section{Algorithm 1} \label{Algorithm 1}
In this section, we introduce an algorithm to evaluate an expression of the following form:

$$
\langle M \rangle = \langle\langle M \vert R \vert \rho_{0} \rangle\rangle,
$$
where $\langle \langle M \vert$ is a measurement operator vector, $R$ is a superoperator corresponding to conjugation by a circuit, and $\vert \rho_{0}\rangle \rangle$ is the initial density matrix vector. The algorithm can be shown to have a generic resource cost of $\mathcal{O}(\chi_{t})$, where $\chi_{t}$ is the total Pauli rank of the circuit, as defined in equation \ref{Total Pauli Rank}. This algorithm could in theory be applied to a computation with any measurement operator, initial state or circuit, however we consider in what follows matchgate + ZZ circuits simulated in the PI-SO setting. Such circuits take the generic form: 

\begin{equation}
U_{MG + ZZ} = U_{MG}^{(m)}U_{ZZ}^{(m)}...U_{MG}^{(1)}U_{ZZ}^{(1)}U_{MG}^{(0)},
\end{equation}
where there are $m + 1$ nearest-neighbour matchgate circuits $U_{MG}$ interspersed with $m$ $U_{ZZ}$ gates. We remind the reader that many non-matchgates of practical relevance, such as SWAP, CZ and CPhase are matchgate equivalent to $U_{ZZ}$, and the matchgate components of these gates can be virtually absorbed into the neighbouring matchgate circuits, though in practice this isn't necessary. Recasting in Liouville notation, we ca write the simulation problem explicitly as:  
\begin{equation} \label{eqn: expectation}
\langle Z_{j}\rangle = \langle\langle Z_{j} |  R_{MG}^{(m)}R_{ZZ}^{(m)}...R^{(1)}_{MG}R^{(1)}_{ZZ}R^{(0)}_{MG}| \rho_{0} \rangle\rangle,
\end{equation}

where the superoperator corresponding to each matchgate circuit is written as $R_{MG}$.  Furthermore, each $R^{(m)}_{ZZ}$ is the linear operator corresponding to a $U_{ZZ}$ gate acting on arbitrary nearest neighbour qubits. To evaluate equation \ref{eqn: expectation}  efficiently,  we  use the `Heisenberg picture' approach.. Furthermore, we use sparse data structures to ensure that no (initially) exponentially-sized object needs to be allocated to memory. Specifically, we can store $\langle \langle Z_{j}^{'} \vert$ during the computation as a dictionary of keys: $\langle \langle Z_{j}^{'} \vert = \{(P_{1}:v_{1}),\dots, (P_{\chi}:v_{\chi})\}$, where each key-value pair corresponds to a $n$-qubit Pauli operator and its coefficient in a Pauli-basis decomposition. Matrix-vector multiplication $ \langle \langle Z_{j}^{'} R\vert $ is then performed by (1), constructing the Pauli transfer matrix $O \in SO(16)$ for each gate, (2) cycling through each key in $ \langle \langle Z_{j}^{'} \vert$ and determining which basis elements are rotated by $O$, and (3) updating the corresponding values with matrix-vector multiplication. 

To do step (1) efficiently, we use the observation that each $R$ matrix is fully characterised by a rotation matrix $O \in SO(16)$ acting on a two-qubit subspace of Pauli operators, stemming from the fact that each gate is two-qubit. For each operator constituting an $R_{MG}$, the structure of $O$ is a sparse block-diagonal with 1,4,6,4,1 dimensional submatrices corresponding to a rotation of the basis of $P_{2}^{(k)}$. The submatrices are calculated as shown in Subroutine 1. For $R_{ZZ}$, the corresponding $O$ matrix can be calculated in constant time. For SWAP, for example, this will simply be a permutation matrix.

For step (2), we identify the elements of $\langle \langle Z_{j}^{'} \vert$ which are rotated by each $O$ matrix. To do this, we  split every key into its \textit{support} and a \textit{stem} on qubits $j,j+1$ denoted $P_{j,j+1}$ and $P_{n-j,j+1}$ respectively, describing the support of the key on the qubits $j$ and $j+1$, and its complement. The rotation will be the same for all keys with the same support. Hence, if $P_{j,j+1} \in P_{2}^{(d)}$, then the new basis elements following rotation are obtained by the tensor product of elements of  $P_{2}^{(d)}$ with the stem (Subroutine 2).

As an example, consider a rotation supported on qubits 1 and 2. Then a key $ZIXY$ is split into $ZI$ (support) and $XY$ (stem). As the support is in $P_{2}^{(2)}$, the basis elements accessed in a rotation will be $ZIXY, IZXY, XXXY, YYXY, XYXY, YXXY$. We can store dictionary keys as integers by encoding $n$-qubit Pauli operator as a bitstring of size $2n$, where $I = 00$, $X = 01$, $Y = 10$, and $Z = 11$. Find(P) can then be efficiently implemented as integer addition and subtraction. Finally, we perform an update of the identified key-value pairs by performing a matrix-vector multiplication between the multiplication of the coefficients $\vec{v}$ corresponding to $\vec{P}$ (Subroutine 3).

The overall matrix-vector multiplication uses these three subroutines as shown in  Algorithm 1. As the measurement vector is constantly being updated, it is important to keep track of which keys have already been rotated to avoid redundant multiplication. For this purpose, a separate data structure with $\mathcal{O}$(1) write/read cost, labelled \textit{temp} (such as a hash table) is used to store the keys already accessed using `Find'.

The total simulation cost is proportional to a quantity denoted $\chi_{t} = \sum_{i}\chi_{i}$ which is the total sum of the Pauli ranks at each step of the circuit. The relevance of $\chi_{t}$ follows from the fact that the  various contributions to the time cost are characterised by $\chi_{i} = \chi(Z_{j}^{'})$ at each iteration. Specifically, the number of `Find' and `Update' calls for each gate can be approximated by $\frac{\chi_{i}}{s} $, where $s$ is the maximum sparsity of each R matrix (which for $R_{MG}$ is s = 6, and $R_{ZZ}$ is s=1). Whereas `Find' can be implemented as integer addition (which is efficient), matrix-vector multiplications during `Update' calls will cost $\sim$ $\mathcal{O}(s^{2})$. Hence the dominant time cost for a single matrix-vector multiplication will be  $\mathcal{O}(\frac{\chi_{i}}{s}s^{2}) = \mathcal{O}(\chi_{i}s)$. Summing the costs across the entire circuit, the overall time cost will be $\mathcal{O}(\chi_{t}s)$. Hence, determining the overall time complexity of the algorithm reduces to finding suitable bounds for $\chi_{t}$. Finally, we note that evaluating the final inner product scales as $\mathcal{O}$ ($\chi_{max}$). 

\begin{figure}[ht]
\begin{minipage}{0.45\textwidth}
\centering
\fbox{
\begin{minipage}{\textwidth}
\underline{\textbf{Subroutine 1}: \textbf{Rotations(U)}} \newline
\textbf{Input:} $U$ \newline
\textbf{Output:} 1,4,6,4,1 dim matrices $R_{(0)},R_{(1)},R_{(2)},R_{(3)},R_{(4)}$\newline

For $d \in [0,4]:$

\hspace{1cm} For $p^{(d)}_{\alpha}$, $p^{(d)}_{\beta}$ in $P_{2}^{(d)}$:

\hspace{2cm} $[R_{(d)}]_{\alpha \beta} = \mathrm{Tr}(U^{\dagger}p_{\alpha}^{(d)}Up_{\beta}^{(d)})$ 
\end{minipage}}
\end{minipage}
\hfill
\begin{minipage}{0.45\textwidth}
\centering
\fbox{
\begin{minipage}{\textwidth}
\underline{\textbf{Subroutine 2: Find(P)}} \newline
\textbf{Input:} An n-qubit Pauli operator $P$, qubit indices $j,j+1$ \newline
\textbf{Output:} Subset of $P_{2}$, $d$ \newline
\hspace*{1.5cm} $\vec{P} = \{P_{1},...,P_{m}\}$ where $m = \vert P_{2}^{(d)} \vert$,\newline
$\textit{support} \gets  P_{j,j+1} \in P_{2}^{(d)} $ \newline
$\textit{stem} \gets  P_{n - (j,j+1)}$  \newline

For $p^{(d)} \in P_{2}^{(d)}$: \newline
\hspace*{1cm} $P_{\alpha}  =  P_{n - (j,j+1)} \otimes p^{(d)}$
\end{minipage}}
\end{minipage}
\end{figure}

\vspace{-0.5cm}

\begin{figure}[ht]
\centering
\fbox{
\begin{minipage}{0.5\textwidth}
\underline{\textbf{Subroutine 3: Update($\langle\langle M \vert, \vec{P}, R^{i}_{(d)})$)}} \newline
\textbf{Input:} A set of keys $\vec{P}$, DofK $\langle\langle M \vert $ \newline
\textbf{Output:} Updated DofK $\langle\langle M \vert $ \newline

For $P_{i} \in \vec{P}$: \newline
\hspace*{1cm} If $(P_{i}, v_{i}) \in \langle\langle M \vert$: 
   $\vec{v} \gets v_{i}$ \newline
\hspace*{1cm}  Else: $\vec{v} \gets 0$ \newline
$\vec{v^{'}} \gets R_{(d)}\vec{v}$ \newline
$\langle\langle M \vert \gets ({\vec{P}}, \vec{v})$
\end{minipage}}
\end{figure}

\begin{figure*}[t]
\begin{center}
\fbox{
\begin{minipage}{\textwidth}
\underline{\textbf{Algorithm 1}: Sparse Pauli-basis simulation for 'Matchgate+ZZ' circuit } \newline
\textbf{Inputs:} Observable M, 'matchgate + ZZ' circuit: $U = U^{(m)}_{MG}U^{(m)}_{ZZ}...U_{MG}^{(1)}U_{ZZ}^{(1)}U_{MG}^{(0)}$ on $n$ qubits, Initial product state $\rho_{0}$.
                  
\textbf{Outputs:} Expectation value $\langle M \rangle = Tr(U^{\dagger}MU \rho_{0})$

\textbf{Procedure:} \newline
Initialise measurement vector (e.g., $\langle \langle M \vert $ ${\gets (Z_{k}:1.0)}$) as a dictionary of keys.

\underline{Matrix-vector Multiplication}

For gates $U^{(i)}_{j,j+1} \in U_{MG}$ acting on qubits $j$ and $j+1$:
\hspace*{0.9cm} $R^{(i)}_{(0)},R^{(i)}_{(1)},R^{(i)}_{(2)},R^{(i)}_{(3)},R^{(i)}_{(4)} \gets \textbf{Rotations}(U^{(i)}_{j,j+1})$ \newline
\hspace*{1.1cm} $\textit{temp} \gets \{\}$ \newline
\hspace*{1cm} For $(P^{(m)},v^{(m)}) \in \langle\langle M \vert $: \newline
\hspace*{2cm} If $P^{(m)} \notin \textit{temp}$: \newline
\hspace*{3cm} $\vec{P}, d \gets$ \textbf{Find($P^{(m)}$)}\newline
\hspace*{3cm} $\langle\langle M\vert \gets \textbf{Update}(\langle\langle M \vert, \vec{P}, R^{(i)}_{(d)})$ \newline
\hspace*{3cm} $\textit{temp} \gets \vec{P}$

\underline{Inner Product} \newline
For $(P_{n}^{(m)},v_{m}) \in \langle\langle M \vert$:
\hspace*{1cm} $\langle{M}\rangle \mathrel{+}= Tr(\rho_{0}P_{n}^{(m)})v_{m}$
\end{minipage}}
\end{center}
\end{figure*}

\end{document}